\def\BSTATE{\STATE\hskip-\ALG@thistlm}
\def\nb0{{\mathbf{0}}}
\def\nb1{{\mathbf{1}}}
\newtheorem{lemma}{Lemma}
\newtheorem{theorem}{Theorem}
\newtheorem{prop}{Proposition}
\newtheorem{remark}{Remark}
\def\argmin{\operatorname{arg~min}}
\def\argmax{\operatorname{arg~max}}
\newcommand\semihuge{\@setfontsize\semihuge{22.3}{22}}
\begin{document}
%\pagenumbering{gobble}
\graphicspath{{./Figures/}}
\title{\semihuge{
Neural Combinatorial Deep Reinforcement Learning for Age-optimal Joint Trajectory and Scheduling Design in UAV-assisted Networks}}
\author{
Aidin Ferdowsi, Mohamed A. Abd-Elmagid, Walid Saad, and Harpreet S. Dhillon \vspace{-18mm}
\thanks{The authors are with Wireless@VT, Department of ECE, Virginia Tech, Blacksburg, VA (Emails: \{aidin,maelaziz,walids,hdhillon\}@vt.edu). This research was supported in part by the U.S. National Science Foundation under Grant CNS-1814477 and in part by the the Office of Naval Research (ONR) under MURI Grant N00014-19-1-2621. 
%Some preliminary results of this paper for a discrete time and space scenario were presented in the IEEE GLOBECOM 2019 \cite{UAV_AoI_DRL1}.
}
%\thanks{Some preliminary results of this paper were presented at the Proceedings of IEEE GLOBECOM 2019 for a discrete time and space scenario \cite{UAV_AoI_DRL1}.}
}
%This research was supported by the U.S. National Science Foundation under Grants --- and ---.

\maketitle

\begin{abstract}
In this paper, an unmanned aerial vehicle (UAV)-assisted wireless network is considered in which a battery-constrained UAV is assumed to move towards energy-constrained ground nodes to receive status updates about their observed processes. The UAV's flight trajectory and scheduling of status updates are jointly optimized with the objective of minimizing the normalized weighted sum of Age of Information (NWAoI) values for different physical processes at the UAV. The problem is first formulated as a mixed-integer program. Then, for a given scheduling policy, a convex optimization-based solution is proposed to derive the UAV's optimal flight trajectory and time instants on updates. However, finding the optimal scheduling policy is challenging due to the combinatorial nature of the formulated problem. Therefore, to complement the proposed convex optimization-based solution, a finite-horizon Markov decision process (MDP) is used to find the optimal scheduling policy. Since the state space of the MDP is extremely large, a novel neural combinatorial-based deep reinforcement learning (NCRL) algorithm using deep Q-network (DQN) is proposed to obtain the optimal policy. However, for large-scale scenarios with numerous nodes, the DQN architecture cannot efficiently learn the optimal scheduling policy anymore. Motivated by this, a long short-term memory (LSTM)-based autoencoder is proposed to map the state space to a fixed-size vector representation in such large-scale scenarios while capturing the spatio-temporal interdependence between the update locations and time instants. A lower bound on the minimum NWAoI is analytically derived which provides system design guidelines on the appropriate choice of importance weights for different nodes. Furthermore, an upper bound on the UAV's minimum speed is obtained to achieve this lower bound value. The numerical results also demonstrate that the proposed NCRL approach can significantly improve the achievable NWAoI per process compared to the baseline policies, such as weight-based and discretized state DQN policies.% In addition, the simulation and analytical results show that, the NWAoI monotonically decreases with the node energy, UAV speed, and time constraint, and it monotonically increases with the number of nodes.
\end{abstract}\vspace{-5mm}
\begin{IEEEkeywords}\vspace{-5mm}
\noindent Age of information, unmanned aerial vehicles, deep reinforcement learning, convex optimization.
\end{IEEEkeywords}
\IEEEpeerreviewmaketitle
\section{Introduction} \label{sec:intro}
Owing to their flexible deployment, the unmanned aerial vehicles (UAVs) have emerged as a key component of future wireless networks. The use of UAVs as flying base stations (BSs), that collect/transmit information from/to ground nodes (e.g., users, sensors or Internet of Things (IoT) devices), has recently attracted significant attention \cite{mozaffari2019tutorial,challita2019machine,bor2016efficient,azari2016joint,alzenad20173,mozaffari2018beyond,Eldosouky,kishk20203}. Meanwhile, introducing UAVs into wireless networks leads to many challenging design questions related to optimal deployment, flight trajectory design, and energy efficiency, to name a few. So far, these challenges have mostly been addressed in the literature using traditional performance metrics such as network coverage, rate and delay. However, such performance metrics lack the ability of quantifying the freshness of information collected by the UAVs since they do not account for the generation times of the information at the ground nodes. %This, in turn, makes the existing solutions to address the challenges associated with incorporating UAVs into future wireless architectures not suitable for implementing many real-time applications, e.g., human safety applications and IoT-enabled applications. In particular, the quality-of-service (QoS) of such real-time applications is mainly restricted by how fresh the collected information from the ground nodes is when it reaches the UAV \cite{abd2018role}. 
As a result, these existing solutions are not always suitable for many real-time monitoring applications, such as safety and IoT applications, whose quality-of-service (QoS) depends upon the freshness of the collected information when it reaches the UAV \cite{abd2018role}. This necessitates the design of new freshness-aware transmission policies that can efficiently guide the UAV's flight trajectory as well as carefully schedule information transmissions from the ground nodes, which is the main objective of this work.
\subsection{Related works}
Trajectory planning for UAVs has gained considerable attention in the recent past \cite{zeng2016throughput,xie2018throughput,li2018placement,Msamir,farajzadeh2019uav,banagar2020performance,monwar2018optimized,Ydu,Fcui,chen2020trajectory,hu2020meta}. 
%The works in \cite{bor2016efficient,alzenad20173,kishk20203} studied the 3D placement of the UAVs for energy-efficient maximal coverage of ground nodes. The authors in \cite{azari2016joint} proposed a joint sum-rate and power gain analysis for UAVs and showed that there is an optimal altitude range, representing a trade-off between sum-rate and transmit power gains. 
The works in \cite{zeng2016throughput,xie2018throughput,li2018placement,Msamir} formulated a non-convex optimization problem to derive an optimal trajectory of the UAV that maximizes the total throughput of the network while taking into consideration the energy limitations of the UAV and ground nodes. Then, different successive convex optimization solutions were proposed to reduce the complexity of the problem. The authors in \cite{farajzadeh2019uav} jointly optimized the UAV's flight trajectory and altitude with the objective of maximizing the total throughput of UAV-assisted backscatter networks. Using tools from stochastic geometry, the authors in \cite{banagar2020performance} characterized the performance of several canonical mobility models in an UAV cellular network. Meanwhile, heuristic methods, flow-shop scheduling, dual decomposition, shortest path, and meta reinforcement learning (RL) techniques have been proposed in \cite{monwar2018optimized,Ydu,Fcui,chen2020trajectory,hu2020meta} for energy efficient and maximal throughput trajectory design in UAV-assisted wireless networks. However, the flight trajectories considered in \cite{zeng2016throughput,xie2018throughput,li2018placement,Msamir,farajzadeh2019uav,banagar2020performance,monwar2018optimized,Ydu,Fcui,chen2020trajectory,hu2020meta} may not necessarily be optimal from the perspective of preserving freshness of the status updates since they were obtained using traditional performance metrics, such as throughput and delay. 
%the optimal UAV's flight trajectories proposed in \cite{zeng2016throughput,xie2018throughput,li2018placement,Msamir,farajzadeh2019uav,monwar2018optimized,Ydu,Fcui,chen2020trajectory,hu2020meta} do not preserve the freshness of information status at the UAV since they were obtained from using traditional performance metrics, e.g., throughput and delay, as the design objectives for the underlying system settings.

%\textcolor{blue}{Second Part: Introducing the concept of AoI and present different research directions that used this metric to optimize wireless networks with time-sensitive information-- \cite{sun2017update}} 
We adopt the concept of age of information (AoI) to quantify the freshness of information at the UAV. First introduced in \cite{kaul2012real}, AoI is defined as the time elapsed since the latest received status update packet at a destination node was generated at the source node. For a simple queueing-theoretic model, the work in \cite{kaul2012real} characterized the average AoI, and demonstrated that the optimal rate at which the source should generate its update packets in order to minimize the average AoI is different from the optimal rates that either maximize throughput or minimize delay. Then, the average AoI and other age-related metrics were investigated in the literature for variations of the model considered in \cite{kaul2012real} (see \cite{kosta2017age_mono} for a comprehensive survey). These early works have inspired the adoption of AoI %The line of research which is more relevant here is the one in which AoI was employed 
as a performance metric for different communication systems that deal with time critical information \cite{sun2017update,kadota2016,hsu2019scheduling,Buyukates_ulu,li2020age_a,zhou2018joint,AbdElmagid2019Globecom_a,Stamatakis_2020,abd2019tcom,zhou2019minimum,wang2020minimizing,AbdElmagid_joint,emara2019spatiotemporal,Praful_GC1,mankar2020stochastic_GC2,abdel2018ultra,bastopcu2019minimizing,altman2019forever}. In particular, AoI has been studied in the context of broadcast networks (e.g., \cite{kadota2016} and \cite{hsu2019scheduling}), multicast networks (\cite{Buyukates_ulu} and \cite{li2020age_a}), transmission scheduling policies \cite{zhou2018joint,AbdElmagid2019Globecom_a,Stamatakis_2020,abd2019tcom,zhou2019minimum,wang2020minimizing,AbdElmagid_joint} and large-scale analysis \cite{emara2019spatiotemporal,Praful_GC1,mankar2020stochastic_GC2} of IoT networks, ultra-reliable low-latency vehicular networks \cite{abdel2018ultra}, and social networks (\cite{bastopcu2019minimizing} and \cite{altman2019forever}).
%The main focus of these works was on applying tools from optimization theory to characterize age-optimal transmission policies. 
Note that the prior art in \cite{kaul2012real,kosta2017age_mono,sun2017update,kadota2016,hsu2019scheduling,Buyukates_ulu,li2020age_a,zhou2018joint,AbdElmagid2019Globecom_a,Stamatakis_2020,abd2019tcom,zhou2019minimum,wang2020minimizing,AbdElmagid_joint,emara2019spatiotemporal,Praful_GC1,mankar2020stochastic_GC2,abdel2018ultra,bastopcu2019minimizing,altman2019forever} assumed the destination node to be 
%static, whereas it will be assumed to be mobile in this paper.
static, and, thus, their results cannot be generalized to a scenario in which the destination is a mobile node such as a UAV.
%the destination node was commonly assumed to be a static node in \cite{sun2017update,kadota2016,hsu2019scheduling,Buyukates_ulu,li2020age_a,zhou2018joint,AbdElmagid2019Globecom_a,Stamatakis_2020,abd2019tcom,zhou2019minimum,wang2020minimizing,AbdElmagid_joint,emara2019spatiotemporal,Praful_GC1,mankar2020stochastic_GC2,abdel2018ultra,bastopcu2019minimizing,altman2019forever}. In contrast, the destination node in our paper is mobile. 
%is represented by a mobile node, i.e., the UAV.
%we study the scenario in which a mobile destination node, represented by the UAV, moves towards the ground nodes to collect status update packets about their observed processes. For this setup, we jointly design the UAV's flight trajectory as well as scheduling of update packet transmissions to achieve the minimum weighted sum-AoI.

%\textcolor{blue}{Third Part, AoI+UAVs papers: No RL papers \cite{UAV_AoI_noRL1,UAV_AoI_noRL2,UAV_AoI_noRL3,UAV_AoI_noRL4}, RL papers \cite{UAV_AoI_RL,UAV_AoI_DRL1,UAV_AoI_DRL2,UAV_AoI_DRL3,UAV_AoI_DRL4} (DRL papers are \cite{UAV_AoI_DRL1,UAV_AoI_DRL2,UAV_AoI_DRL3,UAV_AoI_DRL4})}
The use of UAVs for maintaining freshness of information (quantified using AoI) collected from a set of ground nodes has been recently studied in  \cite{UAV_AoI_noRL1,UAV_AoI_noRL2,UAV_AoI_noRL3,UAV_AoI_noRL4,UAV_AoI_RL,UAV_AoI_DRL1,UAV_AoI_DRL2,UAV_AoI_DRL3,UAV_AoI_DRL4}. The authors in \cite{UAV_AoI_noRL1} investigated the role of a UAV as a mobile relay to minimize the average Peak AoI for a source-destination pair model by jointly optimizing the UAV's flight trajectory as well as energy and service time allocations for the transmission of status updates. Dynamic programming-based approaches were proposed in \cite{UAV_AoI_noRL2,UAV_AoI_noRL3} to optimize the UAV's flight trajectory with the objective of minimizing the average of the AoI values associated with different ground nodes. Furthermore, a graph labeling-based algorithm  was developed in \cite{UAV_AoI_noRL4} to determine the optimal scheduling of update transmissions from the ground nodes while assuming that the UAV is equipped with a battery of finite capacity (which needs to be recharged over time).
%More recently, in \cite{UAV_AoI_noRL1} and \cite{UAV_AoI_noRL2}, the authors considered the optimization of AoI in UAV-assisted wireless networks. However, the analyses in these works were limited to scenarios where UAVs acted as relay nodes and are hence not broadly applicable. Furthermore, these works did not take into account the optimal scheduling of update packet transmissions from different nodes while optimizing the UAV's flight trajectory. 
The works in \cite{UAV_AoI_RL,UAV_AoI_DRL1,UAV_AoI_DRL2,UAV_AoI_DRL3,UAV_AoI_DRL4} proposed techniques from reinforcement learning (RL) to learn age-optimal transmission policies. In particular, in \cite{UAV_AoI_RL}, the authors proposed to use Q-learning for scheduling update transmissions from ground nodes with the objective of minimizing the expired data packets. Meanwhile, deep Q-network (DQN) approaches with different settings were proposed in our early work \cite{UAV_AoI_DRL1} and in \cite{UAV_AoI_DRL2,UAV_AoI_DRL3,UAV_AoI_DRL4} to find an optimal trajectory and/or scheduling policy for the UAV in order to minimize the AoI of ground nodes. However, these works considered discretized trajectory and time instants in their underlying system settings, which introduces approximation errors to the obtained age-optimal policies and limits their implementation in real-world scenarios.
%is not a practical assumption in real-world scenarios. 
%Inspired by this, we have formulated an optimization problem that derives exact time instants and trajectory of the UAV without any discretization of the time or space.

\subsection{Contributions}
The main contribution of this paper is a novel approach that combines tools from convex optimization and deep RL framework for optimizing the UAV's flight trajectory as well as the scheduling of the status update packets from ground nodes with the objective of minimizing the normalized weighted sum of Age of Information (NWAoI) values at the UAV. In particular, we study a UAV-assisted wireless network, in which a UAV moves towards the ground nodes to collect status update packets about their observed processes. For this system setup, we formulate an NWAoI minimization problem in which the UAV's flight trajectory as well as scheduling of update packet transmissions are jointly optimized. 
%To obtain the age-optimal policy, the problem is solved in two steps
The problem is solved in two steps. First, a convex optimization-based approach is proposed to derive the trajectory as well as the update time instants of nodes for a specific scheduling policy. Next, in order to find the optimal scheduling policy, a finite-horizon Markov decision process (MDP) model with finite state and action spaces is proposed. Due to the combinatorial nature of the problem of finding the optimal scheduling policy, the use of a finite-horizon dynamic programming (DP) algorithm is computationally impractical. To overcome this challenge, we propose a {\it neural combinatorial RL (NCRL) algorithm} for this setting \cite{bello2016neural} and \cite{khalil2017learning}. Unlike conventional RL problems, we show that the state of our problem has a two dimensional matrix form with varying number of columns. Therefore, we propose a long short-term memory (LSTM)-based autoencoder that can map the state of the problem with varying sizes into a fixed size state representation. 

Several key system design insights are drawn from our analysis. For instance, we analytically derive a lower bound on the minimum NWAoI, which is useful in deciding the importance weights for different nodes. In particular, a key observation from the analytical expression of the lower bound is that in order to have a similar impact from each node on the NWAoI, the importance weight of each node should be chosen such that it is proportional to the total number of updates transmitted by that node. Furthermore, we derive an upper bound on the UAV's minimum speed to achieve this lower bound value. Our numerical results also demonstrate the superiority of the proposed NCRL approach over the baseline policies, such as weight-based and discretized state policies, in terms of the achievable NWAoI per process. They also reveal that the NWAoI monotonically decreases with the battery sizes of the ground nodes, and the UAV's speed and time constraint, whereas it monotonically increases with the number of nodes. 

To the best of our knowledge, this work is the first to combine tools from convex optimization and deep RL to characterize the age-optimal policy in a practical scenario involving a continuous flight trajectory model for the UAV.
%drawn from our numerical results. For instance, the achievable sum-AoI per process by the proposed algorithm is monotonically increasing (monotonically decreasing) with the time constraint of the UAV and spatial density of the ground nodes (the battery sizes of the ground nodes).
 %while jointly optimizing the UAV's flight trajectory as well as scheduling of status update packets. 
%%
%Particularly, under the distance-based policy, the UAV receives an update packet from its closest ground node while moving towards the node with the largest AoI. On the other hand, as per the random walk policy, the UAV's movement as well as scheduling of update packet transmissions are decided randomly. 
\section{System Model and Problem Formulation}\label{sec:Model}
\subsection{Network Model}
Consider a wireless network in which a set $\mathcal{M}$ of $M$ ground nodes are deployed to observe potentially different physical processes (e.g., agricultural, healthcare, safety, or industrial data) of a certain geographical region. Uplink transmissions are considered, where a UAV collects status update packets from the ground nodes while seeking to maintain freshness of its information status about their observed processes during the time of its operation. We assume that each ground node $m \in \mathcal{M}$ has a battery with finite capacity of $E_{m}^{\rm max}$ and its battery level at time instant $ t $ is denoted by $e_{m}(t) \in [0,E_{m}^{\rm max}]$. As shown in Fig. \ref{fig:system model}, the UAV flies at a fixed height $h$ such that the projection of its flight trajectory on the ground at time instant $t$ is denoted by $L_{u}(t) \triangleq (x_u(t),y_u(t)) $, where $ x_u(t) $ and $ y_u(t) $ represent the projection of the UAV's location on the $ x $ and $ y $ axes, respectively. Furthermore, we define $ v_{u,x}(t) $ and $ v_{u,y}(t) $ as the UAV's velocity in the $ x $ and $ y $ directions at time instant $ t $ such that we have:
\begin{align}
	&\frac{{\rm d}x_u(t)}{{\rm d}t} = v_{u,x}(t),\,\, \frac{{\rm d}y_u(t)}{{\rm d}t} = v_{u,y}(t),\label{eq:loc_speed}\\
	&-v^{\textrm{max}}_{x}\leq v_{u,x}(t)\leq v^{\textrm{max}}_{x},\,\, -v^{\textrm{max}}_{y}\leq v_{u,y}(t)\leq v^{\textrm{max}}_{y},\label{eq:max_speed}
\end{align}
where $ v^{\textrm{max}}_{x} $ and $ v^{\textrm{max}}_{y} $ represent the maximum speed of the UAV in the horizontal and vertical directions, respectively. Due to battery constraints, the UAV can only operate for a finite time interval. We model this fact by having a time constraint of $\tau$ seconds during which the UAV flies from an initial location $L_{u}^{\rm i}$ to a final location $L_{u}^{\rm f}$ where it can be recharged to continue its operation. Similar to \cite{zeng2016throughput,li2018placement,xie2018throughput}, the channels between the UAV and ground nodes are assumed to be dominated by the line-of-sight (LoS) links. Therefore, at time instant $t$, the channel power gain between the UAV and ground node $m$ is modeled as:
\begin{align}\label{eq:channel}
g_{u,m}(t) = \beta_0 d_{u,m}^{- 2}(t) = \frac{\beta_0}{h^2 + \lVert L_{u}(t) - L_{m}\rVert^2},\; m \in \mathcal{M},
\end{align} 
where $d_{u,m}(t)$ is the distance between the UAV and node $m$ at time instant $t$, $L_{m} =[x_m,y_m]$ is the location of node $m$, and $\beta_0$ is the channel gain at a reference distance of 1 meter.

\begin{figure}[t!]
    \centering
    \vspace{-3mm}
    \includegraphics[width=0.55\columnwidth]{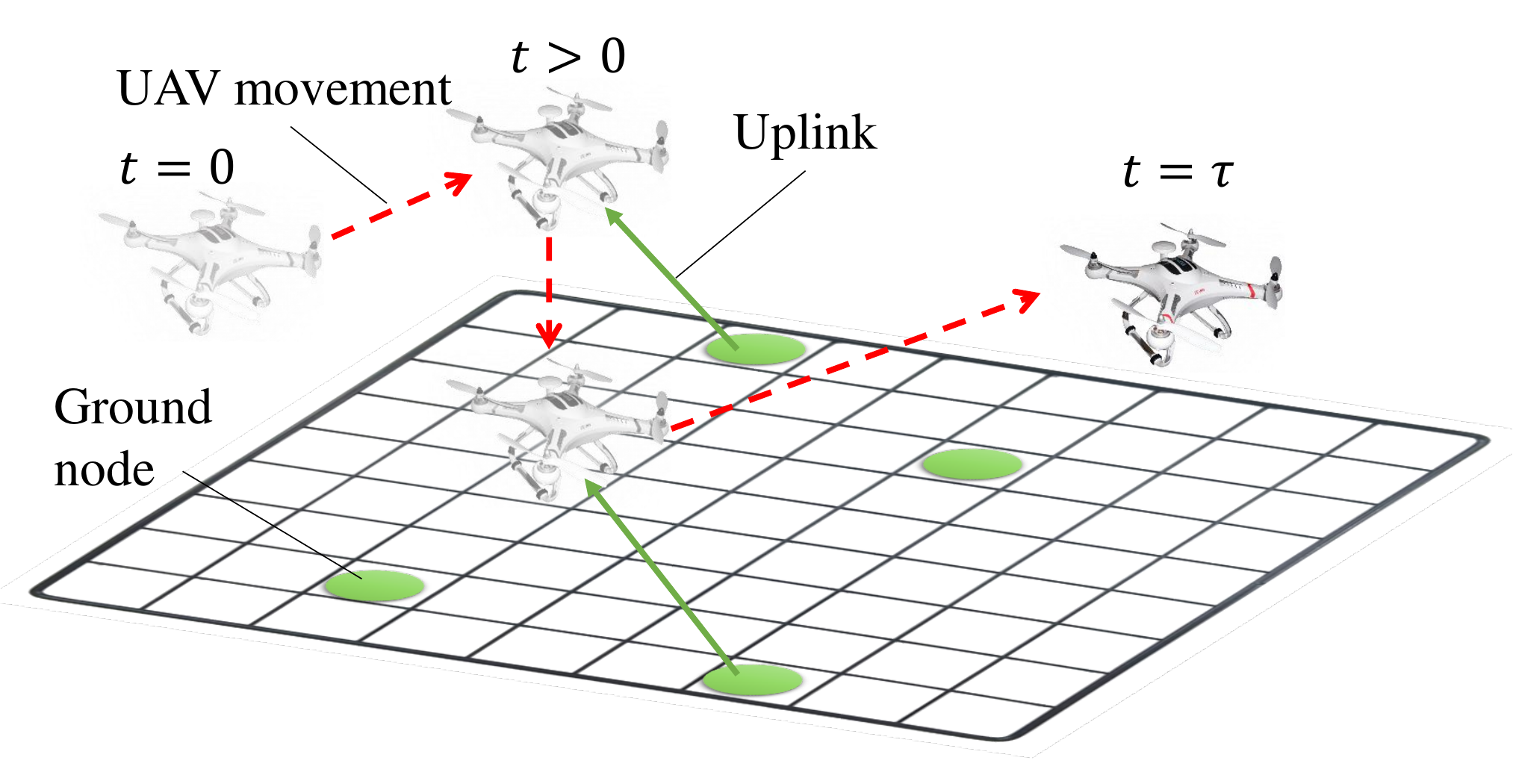}
    \caption{An illustration of our system model.}
    \label{fig:system model}
\end{figure}
 The AoI of an arbitrary physical process is defined as the time elapsed since the most recently received update packet at the UAV was generated at the ground node observing this process. We let $A_{m}(t) \geq A^{\textrm{min}}_{m}$ be the AoI at the UAV for the process observed by node $m$ at time instant $t$, where $A^{\textrm{min}}_{m}$ is the minimum value for $A_{m}(t)$, which is non-zero because of the transmission delay of the wireless link. Since we do not explicitly model this delay in our setup, we simply interpret $A_m^{\rm min}$ as a constant that will correspond to the worst-case transmission delay. Note that this is a reasonable assumption since the value of $A^{\textrm{min}}_{m}$ is negligible compared to the difference between any two consecutive update time instants ($A^{\textrm{min}}_{m}$ is in the order of milliseconds whereas the difference between any two consecutive update time instants is in the order of seconds). Let $ t_{i,m} $ be the time instant at which node $ m $ transmits an update packet for the $ i $-th time. Hence, the AoI dynamics for the process observed by node $m$ will be:
 \begin{align}\label{eq:AoI_evol}
 A_{m}(t) = A^{\textrm{min}}_{m} + t-t_{i-1,m}, \forall t \in[t_{i-1,m},t_{i,m})\,\, \& \,\, i\in \left\{1,\dots,n_m\right\},
 \end{align}
 where $ t_{0,m} \triangleq 0 $ and $ n_m $ is the total number of updates transmitted by node  $ m $. Therefore, as shown in Fig. \ref{fig:AoI}, when $t = t_{i,m}$, the AoI of the observed process is reset to $A_m^{\min}$; otherwise, the AoI value increases linearly. 
 \begin{figure}[t!]
 	\centering
 	\includegraphics[width=0.45\columnwidth]{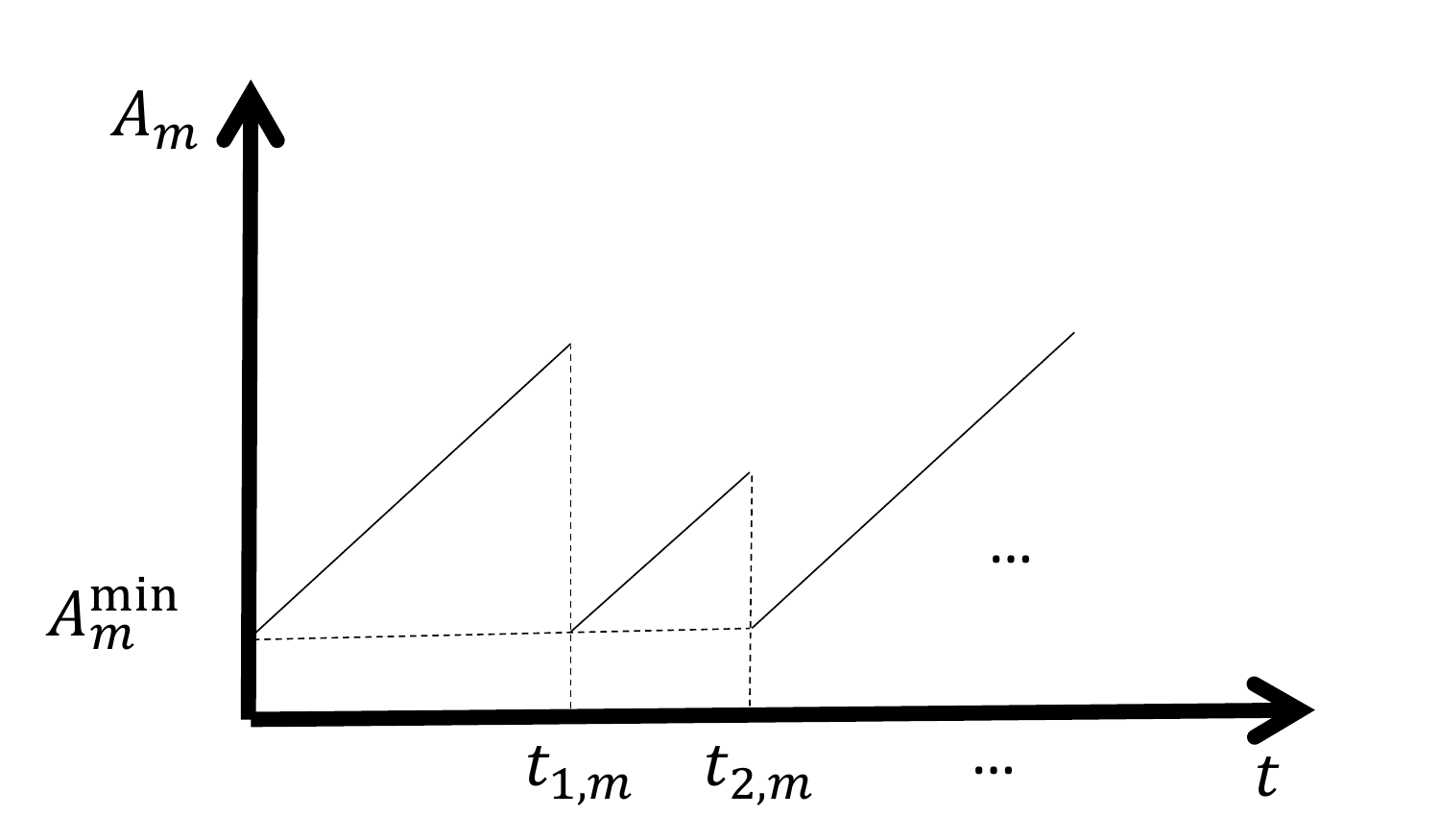}
 	\vspace{-3mm}
 	\caption{AoI evolution vs. update time instants.}
 	\label{fig:AoI}
 \end{figure}
 By letting $S, B$, and $\sigma^2$ be the size of an update packet, channel bandwidth, and noise power at the UAV, respectively, the energy required to transmit an update packet from node $m$ is given according to Shannon's formula as:
 \begin{align}\label{eq:energy_con}
 E_{m}(t) = \frac{\sigma^2}{g_{u,m}(t)}\left(2^{S/B} - 1\right).
 \end{align}
 
 Clearly, when node $m$ is scheduled to transmit an update packet at time instant $t$, its current battery level $e_{m}(t)$ should be at least equal to $E_{m}(t)$. Therefore, the energy level at node $m$ is updated as $e_{m}(t) \triangleq e_{m}(t) - E_{m}(t),\, \forall i: t = t_{i,m}$.

 \subsection{Problem Formulation}
 Our goal is to characterize the \emph{age-optimal policy} which determines the UAV's velocity and the node scheduled for transmission at every time instant over a finite horizon of time $\tau$. Let $\boldsymbol{t}_m \triangleq \left[t_{1,m}, \dots, t_{n_m,m}\right]^T$ be an ordered vector that contains the time instants during which node $ m $ transmits its update packets to the UAV. Then, a policy $ \pi $ consists of $ v_{u,x}(t) $ and $ v_{u,y}(t) $, for all $t\in [0,\tau]$, and $ \boldsymbol{t}_m $ for all $ m \in \mathcal{M} $. The objective of the age-optimal policy is to minimize the NWAoI defined as follows:
 \begin{align}
 	G(\boldsymbol{t}_1,\dots, \boldsymbol{t}_M) \triangleq \frac{2}{\tau^2} \sum_{m=1}^M\left(\lambda_{m}\int_{0}^{\tau}A_m(t)dt\right),\label{eq:obj}
 \end{align}
 where $\frac{2}{\tau^2}$ is a normalization factor since for a given value of $n_m, \forall m \in \mathcal{M}$, we will have $0< G(\boldsymbol{t}_1,\dots, \boldsymbol{t}_M) \leq \frac{\tau^2}{2}$. Also, $\lambda_{m}\geq 0 $ is the importance weight of the process observed by node $m$ with $ \sum_{m = 1}^{M} \lambda_{m} = 1 $. Every term of the sum in \eqref{eq:obj} can be simplified as follows:
 \begin{align}
 	\int_{0}^{\tau}A_m(t)dt &= \sum_{i=1}^{n_m}\int_{t_{i-1,m}}^{t_{i,m}}A_m(t) dt + \int_{t_{n_m,m}}^{\tau} A_m(t)dt\nonumber \\
 	&=\sum_{i=1}^{n_m}\left[ A^{\textrm{min}}_{m} (t_{i,m} - t_{i-1,m}) + \frac{(t_{i,m} - t_{i-1,m})^2}{2}\right]
 	+ A^{\textrm{min}}_{m} (\tau - t_{n_m,m})+ \frac{(\tau - t_{n_m,m})^2}{2}\nonumber\\
 	 &=A^{\textrm{min}}_{m} \tau + \sum_{m=1}^{n_m +1}\frac{(t_{i,m} - t_{i-1,m})^2}{2}, \label{eq:integ}
 \end{align}
 such that $ t_{n_m+1,m} \triangleq \tau $. From \eqref{eq:integ}, we can see that $ A^{\textrm{min}}_{m} $ is a fixed value that will have no impact on the optimal solution. Thus, we remove $ A^{\textrm{min}}_{m} $ from \eqref{eq:integ} and define a modified NWAoI as follows:
 \begin{align}\label{eq:objmodified}
 	\bar{G}(\boldsymbol{t}_1,\dots, \boldsymbol{t}_M) \triangleq \frac{1}{\tau^2}\sum_{m=1}^{M}\lambda_{m} \sum_{i = 1}^{n_m +1}(t_{i,m} - t_{i-1,m})^2.
 \end{align}

 Hence, our goal is to find a policy that minimizes the NWAoI in \eqref{eq:objmodified} considering the time, location, speed, and energy constraints, which translates into the following optimization problem:
 \begin{align}
 	&\min_{v_{u,x}(t),v_{u,y}(t), \boldsymbol{t}_1,\dots, \boldsymbol{t}_M} \bar{G}(\boldsymbol{t}_1,\dots \boldsymbol{t}_M),\label{eq:problem}\\
 	\textrm{s.t.}&  \sum_{i = 1}^{n_m}E_m(t_{i,m})\leq E^{\textrm{max}}_{m},\,\, \forall m \in \mathcal{M},\label{eq:energyConst}\\
 	& L_u(0) = L^i_u,\label{eq:init}\\
 	& L_u(\tau) = L^f_u, \label{eq:final}\\
 	&\frac{{\rm d}x_u(t)}{{\rm d}t} = v_{u,x}(t),\,\, \frac{{\rm d}y_u(t)}{{\rm d}t} = v_{u,y}(t),\label{eq:loc_speed_const}\\
	&-v^{\textrm{max}}_{x}\leq v_{u,x}(t)\leq v^{\textrm{max}}_{x},\,\, -v^{\textrm{max}}_{y}\leq v_{u,y}(t)\leq v^{\textrm{max}}_{y}.\label{eq:max_speed_const}
\end{align}

Constraint \eqref{eq:energyConst} comes from the fact that each node's total energy consumption for packet transmissions is constrained by its total available energy. The constraints on the initial and final location of the UAV are represented by \eqref{eq:init} and \eqref{eq:final} whereas the UAV's velocity constraints are represented by \eqref{eq:loc_speed_const} and \eqref{eq:max_speed_const}. Solving \eqref{eq:problem} is challenging because the number of times each node transmits its update packets is an unknown variable, thus, \eqref{eq:problem} needs to be solved for each choice of $n_m$ to obtain the minimum NWAoI. In addition, the constraints on the UAV's speed as well as its initial and final locations must be satisfied by the UAV's trajectory, and an energy constraint is required to be satisfied for each node. Therefore, \eqref{eq:problem} is a constrained mixed-integer problem which is challenging to solve \cite{optimization_book}. To this end, we provide a relaxation on the problem that helps us to derive the exact optimal solution using a convex optimization-based approach.

\section{Convex Optimization-based Age-optimal Trajectory}
In order to relax the problem in \eqref{eq:problem}, let us consider fixed values for $ n_{1}, \dots,n_{M} $. In other words, we will now solve problem \eqref{eq:problem} assuming that we know how many times each node should send their update packets to the UAV. In Section \ref{sec:NCRL}, we will provide an algorithm to find the optimal values for $ n_{1}, \dots,n_{M} $. We define a mapping $ \Upsilon: \boldsymbol{t}_1,\dots,\boldsymbol{t}_M \mapsto \boldsymbol{t}_u$ which maps the time instants for packet updates of each node to a sequence $ \boldsymbol{t}_u = \{t_1,\dots,t_{n}\} $ such that $ n \triangleq \sum_{i=1}^M n_i $ and $ t_i \leq t_{i+1},\,\forall i = 1,\dots,n $. Mapping $ \Upsilon $ indicates the order with which the nodes must transmit their packets to the UAV. For instance if $ t_{i,j} $ is mapped to $ t_k $ and $ t_{l,q} $ is mapped to $ t_{k+1} $, then node $ j $ transmits its $ i $-th update packet to the UAV before node $ q $ transmits its $ l $-th update packet to the UAV. 

We define $ \boldsymbol{x}_u \triangleq \left[x_{u,1},\dots,x_{u,n}\right]^T $ and $ \boldsymbol{y}_u \triangleq \left[y_{u,1},\dots,y_{u,n}\right]^T $ such that $ x_{u,i} \triangleq  x_u(t_i)$ and  $  y_{u,i} \triangleq y_u(t_i) $, $ 1 \leq i \leq n $. Here, $ (x_{u,0} , y_{u,0}) $ represents the initial location of the UAV, and $ (x_{u,n+1} , y_{u,n+1}) $ represents the final location of the UAV. We define $ t_0 = 0 $ and $ t_{n+1} = \tau$. Now, from \eqref{eq:loc_speed} we can write:
\begin{align}
	x_{u,i+1} - x_{u,i} = \int_{t_i}^{t_{i+1}} v_{u,x}(t) dt, \forall i \in \left\{0,\dots, n\right\}, \label{eq:Xconstraint} \\
	y_{u,i+1} - y_{u,i} = \int_{t_i}^{t_{i+1}} v_{u,y}(t) dt, \forall i \in \left\{0,\dots, n\right\},\label{eq:Yconstraint}
\end{align}
such that \eqref{eq:Xconstraint} and \eqref{eq:Yconstraint} are feasible if: 
\begin{align}
	|x_{u,i+1} - x_{u,i}|&\leq v^{\textrm{max}}_{x} (t_{i+1} - t_i),\label{eq:xspeedlimit}\\ |y_{u,i+1} - y_{u,i}|&\leq v^{\textrm{max}}_{y} (t_{i+1} - t_i).\label{eq:yspeedlimit}
\end{align}
 Equations \eqref{eq:xspeedlimit} and \eqref{eq:yspeedlimit} indicate that the distance between the UAVs' location in two consecutive time instants is constrained due to the UAVs' speed limitations in \eqref{eq:max_speed_const}. For example, if \eqref{eq:xspeedlimit} and \eqref{eq:yspeedlimit} are satisfied, one solution can be $ v_{u,x}(t) = \frac{x_{u,i+1}-x_{u,i}}{t_{i+1}-t_i} $ and $ v_{u,y}(t) = \frac{y_{u,i+1}-y_{u,i}}{t_{i+1}-t_i} $.
 
In addition, let $ \boldsymbol{x}_m \triangleq \left[x_{m,1} ,\dots,x_{m,n_m}\right]^T $ and $ \boldsymbol{y}_m \triangleq \left[y_{m,1},\dots,y_{m,n_m}\right]^T $ such that $ x_{m,i} \triangleq x_{u}(t_{i,m}) $ and $ y_{m,i} \triangleq y_{u}(t_{i,m}) $. Note that in this case, the mapping $ \Upsilon $ maps $ \boldsymbol{x}_1, \dots, \boldsymbol{x}_M $ to $ \boldsymbol{x}_u $, and maps $ \boldsymbol{y}_1, \dots, \boldsymbol{y}_M $ to $ \boldsymbol{y}_u $. Now, we can express node $ m $'s energy requirement for constraint \eqref{eq:energyConst} as $
 	\sum_{t \in \mathcal{T}_m}E_m(t) = \frac{\sigma ^2 \left(2^{S/B} - 1\right)}{\beta_0}\sum_{i = 1 }^{n_m}\left[\left(x_{m,i}-x_m\right)^2 + \left(y_{m,i}-y_m\right)^2 + h^2\right] $. Moreover, we define $ c_m \triangleq \frac{E^{\textrm{max}}_{m}\beta_0}{\sigma ^2 \left(2^{S/B} - 1\right)} - n_mh^2 $. Next, we can express the problem in \eqref{eq:problem} for a given scheduling policy (order of updates) as follows:
 	\begin{align}
 		&\min_{\boldsymbol{x}_u,\boldsymbol{y}_u,\boldsymbol{t}_u}\bar{G}(\boldsymbol{t}_u),\label{eq:objrelaxed}\\
 		\textrm{s.t.}&\sum_{i = 1}^{n_m}\left[\left(x_{m,i}-x_m\right)^2 + \left(y_{m,i}-y_m\right)^2 \right] \leq c_m,\, \forall m\in \mathcal{M},\label{eq:energy}\\
 		&|x_{u,i+1} - x_{u,i}|\leq v^{\textrm{max}}_{x} (t_{i+1} - t_i),\, \forall i \in \{0,\dots,n\},\label{eq:xspeedconst}\\ &|y_{u,i+1} - y_{u,i}|\leq v^{\textrm{max}}_{y} (t_{i+1} - t_i),\, \forall i \in \{0,\dots,n\},\label{eq:yspeedconst}\\
 		\label{eq:timeconst}
 		&0\leq t_i\leq\tau,\, \forall i \in \{1,\dots,n\}.
 	\end{align}
\begin{lemma}
The problem in \eqref{eq:objrelaxed} is a convex optimization problem.
\end{lemma}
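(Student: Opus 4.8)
The plan is to verify the three standard conditions that make an optimization problem convex in the sense of \cite{optimization_book}: (i) the objective is a convex function of the decision variables $(\boldsymbol{x}_u,\boldsymbol{y}_u,\boldsymbol{t}_u)$, (ii) every inequality constraint has the form $g(\cdot)\le 0$ with $g$ convex, and (iii) every equality constraint is affine. Since problem \eqref{eq:objrelaxed}--\eqref{eq:timeconst} contains no equality constraints among its decision variables (the kinematic relations having already been relaxed into the speed inequalities \eqref{eq:xspeedconst}--\eqref{eq:yspeedconst}, with the fixed endpoints $t_0=0$, $t_{n+1}=\tau$ and the given initial/final UAV locations treated as constants), condition (iii) is vacuous and it suffices to establish (i) and (ii).

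First I would treat the objective. After applying the mapping $\Upsilon$, each update time $t_{i,m}$ equals some component of the ordered vector $\boldsymbol{t}_u$ (with the conventions $t_{0,m}=0$ and $t_{n_m+1,m}=\tau$), so every gap $t_{i,m}-t_{i-1,m}$ appearing in \eqref{eq:objmodified} is an affine function of $\boldsymbol{t}_u$. Consequently each summand $(t_{i,m}-t_{i-1,m})^2$ is the composition of the convex scalar map $z\mapsto z^2$ with an affine map, hence convex; and $\bar{G}$, being a nonnegative-weighted sum (with weights $\lambda_m/\tau^2\ge 0$) of such terms, is convex. Equivalently, $\bar{G}$ is a quadratic form in $\boldsymbol{t}_u$ whose Hessian is positive semidefinite.

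Next I would dispatch the constraints in turn. In \eqref{eq:energy}, the left-hand side $\sum_{i=1}^{n_m}\left[(x_{m,i}-x_m)^2+(y_{m,i}-y_m)^2\right]$ is, since the node coordinates $x_m,y_m$ are fixed, a sum of squares of affine functions of components of $(\boldsymbol{x}_u,\boldsymbol{y}_u)$ and is therefore convex; thus \eqref{eq:energy} is a sublevel set of a convex function. The speed constraints \eqref{eq:xspeedconst} and \eqref{eq:yspeedconst} each have the form $|A|\le B$ with $A$ and $B$ affine, which is equivalent to the pair of affine inequalities $A-B\le 0$ and $-A-B\le 0$; hence they carve out a polyhedron. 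Finally, the box constraints \eqref{eq:timeconst} are affine. The feasible region, being an intersection of these convex sets, is convex.

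Combining a convex objective with a convex feasible region establishes the claim. I do not anticipate a genuine obstacle; the only point demanding care is confirming that, for a fixed ordering induced by $\Upsilon$, every gap $t_{i,m}-t_{i-1,m}$ stays affine in $\boldsymbol{t}_u$ even when transmissions of distinct nodes are interleaved, so that consecutive updates of node $m$ need not occupy adjacent entries of $\boldsymbol{t}_u$. Because $\Upsilon$ merely relabels (reindexes) components without introducing nonlinearity, affinity is preserved, and it is precisely this squared-affine structure of both the objective and the energy constraint that renders the fixed-ordering subproblem convex rather than merely mixed-integer.
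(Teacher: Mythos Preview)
Your proposal is correct and follows the same overall strategy as the paper: verify separately that the objective and each constraint define convex functions/sets. The only noteworthy difference is in how you justify convexity of the objective. The paper rewrites $\sum_{i=1}^{n_m+1}(t_{i,m}-t_{i-1,m})^2$ as $\boldsymbol{t}_m^T\boldsymbol{Q}_m\boldsymbol{t}_m+\tau^2-2\tau t_{n_m,m}$ with $\boldsymbol{Q}_m$ the tridiagonal matrix having $2$ on the diagonal and $-1$ on the sub/super-diagonals, and then argues that $\boldsymbol{Q}_m$ is symmetric, diagonally dominant with positive diagonal, hence positive (semi)definite. Your route---each term is the square of an affine map of $\boldsymbol{t}_u$, and a nonnegative-weighted sum of such terms is convex---is more elementary and sidesteps the matrix argument entirely (in fact it avoids the minor subtlety that $\boldsymbol{Q}_m$ is only \emph{weakly} diagonally dominant in its interior rows, so the paper's dominance argument yields positive semidefiniteness rather than strict definiteness, which is all that is needed anyway). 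For the constraints the two arguments are essentially identical: the energy constraint is a sum of squared affine terms, and the speed/time constraints are affine.
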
	
\begin{proof}
	The term $ \sum_{i=1}^{n_m +1}(t_{i,m} - t_{i-1,m})^2 $ in \eqref{eq:objrelaxed} can be expressed as:
	\begin{align}
		\sum_{1}^{n_m +1}(t_{i,m} - t_{i-1,m})^2  = \boldsymbol{t}_m^T \boldsymbol{Q}_m \boldsymbol{t}_m + \tau^2 - 2\tau t_{n_m},
	\end{align}
	such that:
	\begin{align}
		\boldsymbol{Q}_m = \left[
			\begin{array}{c c c c c}
				2 & -1 & 0 & \cdots & 0\\
				-1 & 2 & -1 & \ddots & \vdots\\
				0 & -1 & \ddots & \ddots &0\\
				\vdots & \ddots & \ddots & \ddots & -1 \\
				0 & \cdots & 0 & -1 & 2
			\end{array}
		\right].
	\end{align}
	$ \boldsymbol{Q}_m $ is a diagonally dominant matrix meaning that the magnitude of the diagonal entry in a row is larger than or equal to the sum of the magnitudes of all the other (non-diagonal) entries in that row. Moreover, $ \boldsymbol{Q}_m $ is symmetric and its diagonal entries are positive. Therefore, $ \boldsymbol{Q}_m $ is a positive definite matrix. Hence, for any $ m \in \mathcal{M} $, $  \sum_{1}^{n_m +1}(t_{i,m} - t_{i-1,m})^2 $ is convex thus, \eqref{eq:objrelaxed} is convex \cite{optimization_book}. The left hand side of the condition in \eqref{eq:energyConst} can be written as $ \sum_{i = 1}^{n_m}\left[\left(x_{m,i}-x_m\right)^2 + \left(y_{m,i}-y_m\right)^2\right] = \boldsymbol{x}_m^T \boldsymbol{I}_{n_m}\boldsymbol{x}_m + \boldsymbol{y}_m^T\boldsymbol{I}_{n_m}\boldsymbol{y}_m - 2x_m\sum_{i=1}^{n_m}x_{i,m} - 2y_m\sum_{i=1}^{n_m}y_{i,m} + n_mx_m^{n_m} + n_my_m^{n_m}$ where $ \boldsymbol{I}_{n_m} $ is an identity matrix with $ n_m $ columns and rows. Since $ \boldsymbol{I}_{n_m} $ is an identity matrix, then, $ \boldsymbol{x}_m^T \boldsymbol{I}_{n_m}\boldsymbol{x}_m $ and $ \boldsymbol{y}_m^T\boldsymbol{I}_{n_m}\boldsymbol{y}_m $ are convex terms. We now see that, $ - 2x_m\sum_{i=1}^{n_m}x_{i,m} $ and $ - 2y_m\sum_{i=1}^{n_m}y_{i,m} $ are linear terms and $ n_mx_m^{n_m} + n_my_m^{n_m} $ is a constant value. Therefore, the constraint in \eqref{eq:energyConst} is convex. Meanwhile, the constraints in \eqref{eq:xspeedconst}-\eqref{eq:timeconst} are linear. Therefore, \eqref{eq:objrelaxed} is a convex optimization problem which completes the proof.
	%and will have only 1 solution if the constraints are satisfied or 0 solutions, otherwise, which completes the proof.
 \end{proof}
 
%The objective function in \eqref{eq:objrelaxed} and the constraint in \eqref{eq:energy} are convex and the constraints in \eqref{eq:xspeedconst}, \eqref{eq:yspeedconst}, and \eqref{eq:timeconst} are linear. Therefore, the problem in \eqref{eq:objrelaxed} can be solved using standard convex optimization solvers, e.g., the interior point method \cite{optimization_book}. 
Moreover, for some special cases, we can derive a closed-form expression for the minimum NWAoI, as shown next. 
\subsection{NWAoI Lower Bound Analysis}
A lower bound on the minimum NWAoI can be derived by considering no limits on the UAV's speed. To derive this lower bound value, we define $ \bar{n}_m \triangleq \left\lfloor\frac{E^{\textrm{max}}_{m}\beta_0}{\sigma ^2 \left(2^{S/B} - 1\right)h^2}\right\rfloor $ as the maximum number of times that node $ m $ can send update packets since if the UAV stays on top of node $ m $, it requires exactly $ \frac{\sigma ^2 \left(2^{S/B} - 1\right)h^2}{\beta_0} $ amount of energy for each update transmission.
\begin{theorem}\label{theorem:lowerbound}
 A lower bound on the minimum NWAoI can be expressed as follows:
	\begin{align}\label{eq:minNWAoI}
		\bar{G} \geq \bar{G}_{\rm min} \triangleq\sum_{m=1}^{M}\frac{\lambda_{m}}{\bar{n}_m+1}.
	\end{align}
\end{theorem}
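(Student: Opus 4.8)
The plan is to bound $\bar{G}$ from below for an \emph{arbitrary feasible policy} by treating each node separately. The objective $\bar{G}$ in \eqref{eq:objmodified} is a non-negative weighted sum over $m$, and the energy constraint \eqref{eq:energyConst} is itself per-node; the only coupling across nodes comes from the shared trajectory and the speed limits. Since the lower bound is stated under no speed restriction, dropping the speed constraints only enlarges the feasible set (and hence can only lower the optimum), so the per-node terms decouple and it suffices to minimize each one independently. Fixing a node $m$ and writing $\delta_i \triangleq t_{i,m}-t_{i-1,m}$, the first thing I would record is that these $n_m+1$ interval lengths are non-negative and telescope to a constant: with $t_{0,m}=0$ and $t_{n_m+1,m}=\tau$, we have $\sum_{i=1}^{n_m+1}\delta_i = \tau$. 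Thus the inner sum $\sum_{i=1}^{n_m+1}\delta_i^2$ is the squared $\ell_2$-norm of a non-negative vector with fixed $\ell_1$-mass $\tau$.

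The first key step is to minimize this inner sum over the $\delta_i$. By the Cauchy--Schwarz inequality (equivalently, convexity of $x\mapsto x^2$ together with Jensen's inequality),
\begin{align}
\sum_{i=1}^{n_m+1}\delta_i^2 \;\geq\; \frac{1}{n_m+1}\Big(\sum_{i=1}^{n_m+1}\delta_i\Big)^2 = \frac{\tau^2}{n_m+1},
\end{align}
with equality precisely when the updates are equispaced, $\delta_i = \tau/(n_m+1)$. Substituting into \eqref{eq:objmodified} makes the normalization $1/\tau^2$ cancel and yields $\bar{G}\geq \sum_{m=1}^{M}\lambda_m/(n_m+1)$. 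The second key step is to bound the number of updates $n_m$ using the energy budget. From \eqref{eq:channel}, $d_{u,m}^2(t)=h^2+\lVert L_u(t)-L_m\rVert^2\geq h^2$, so $g_{u,m}(t)\leq \beta_0/h^2$, whence by \eqref{eq:energy_con} each transmission costs at least $E_m(t)\geq \sigma^2(2^{S/B}-1)h^2/\beta_0$. Summing over the $n_m$ transmissions and invoking \eqref{eq:energyConst} gives $n_m\,\sigma^2(2^{S/B}-1)h^2/\beta_0\leq E_m^{\textrm{max}}$, i.e.\ $n_m\leq E_m^{\textrm{max}}\beta_0/(\sigma^2(2^{S/B}-1)h^2)$; since $n_m$ is an integer, this is exactly $n_m\leq\bar{n}_m$.

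Finally I would combine the two bounds. Because $\lambda_m\geq 0$, the map $n_m\mapsto \lambda_m/(n_m+1)$ is non-increasing, so $\lambda_m/(n_m+1)\geq \lambda_m/(\bar{n}_m+1)$ for every feasible $n_m$, and summing over $m$ delivers $\bar{G}\geq\sum_{m=1}^{M}\lambda_m/(\bar{n}_m+1)=\bar{G}_{\rm min}$. I expect the genuine subtlety to lie not in the convexity estimate, which is routine, but in the energy step: one must argue that the per-transmission energy is minimized, and hence the largest admissible $n_m$ attained, exactly when the UAV hovers directly above node $m$ at distance $h$. This is the origin of the floor operation in the definition of $\bar{n}_m$, and it is also why discarding the speed constraints is precisely what renders the bound both valid for every feasible policy and, under unlimited speed with equispaced overhead transmissions, attainable.
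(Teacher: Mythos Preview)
Your proof is correct and follows the same high-level outline as the paper's: drop the speed constraints (which only enlarges the feasible set), bound the number of updates per node via the minimum per-transmission energy $\sigma^2(2^{S/B}-1)h^2/\beta_0$, and then minimize the per-node contribution $\sum_i \delta_i^2$ subject to $\sum_i\delta_i=\tau$. The difference lies in how that last minimization is handled. The paper writes out the first-order optimality conditions $\partial\bar{G}/\partial\delta_{i,m}=0$, assembles the resulting $\bar n_m\times\bar n_m$ linear system with $2$'s on the diagonal and $1$'s off-diagonal, and solves it by row reduction to obtain the equispaced solution $\delta_{i,m}=\tau/(\bar n_m+1)$. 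You bypass this entirely with the Cauchy--Schwarz (equivalently Jensen) inequality $\sum_i\delta_i^2\geq(\sum_i\delta_i)^2/(n_m+1)$, which gives the same minimizer in one line. Your route is shorter and more elementary; it also makes the monotonicity step $\lambda_m/(n_m+1)\geq\lambda_m/(\bar n_m+1)$ explicit, whereas the paper tacitly fixes $n_m=\bar n_m$ from the outset without separately arguing that more updates can only help. Both arguments yield the same bound and the same equality case.
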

\begin{proof}
	See Appendix \ref{App:Theorem1}.
\end{proof}
\begin{remark} 
Theorem \ref{theorem:lowerbound} shows that the optimal scheduling policy that results in the lower bound on NWAoI in \eqref{eq:minNWAoI} is the one that updates every node $m$ periodically after every $ \frac{\tau}{\bar{n}_m + 1} $  seconds. Moreover, we can see from \eqref{eq:minNWAoI} that, since $ \bar{n}_m $ is linearly dependent on $ E^{\textrm{max}}_{m} $, the nodes with lower battery capacities can have a higher impact on the NWAoI. This can be helpful in deciding on the node importance values, $ \lambda_m $. For instance, in order to have an equal impact from each node, $ \lambda_m $ can be chosen to be proportional to $ \bar{n}_m +1 $.
\end{remark}
Although Theorem \ref{theorem:lowerbound} provides a lower bound on the minimum NWAoI, this lower bound value may not be achievable in practice because we did not account for the speed limitations of the UAV while deriving this bound. That said, it is natural to wonder about the minimum speed of the UAV required to achieve the bound in \eqref{eq:minNWAoI}, which is studied next. The main idea is that the UAV receives the updates from the nodes not exactly on top of them but at a small distance away from them (by using the residual of the energy left from the floor operation in finding $\bar{n}_m$), which reduces the distance between two update locations and, hence, minimizes the required speed. In particular, the minimum speed requirement that allows the UAV to achieve the lower bound in \eqref{eq:minNWAoI} is the solution of the following optimization problem:
\begin{align}\label{eq:minSpeed}
	&v_{\textrm{min}} = 	\min_{\boldsymbol{x}_u,\boldsymbol{y}_u} v\\
	\textrm{s.t.}&\,\,t_{i,m} = \frac{i\tau}{\bar{n}_m+1},\, \forall m \in \{ 1,\dots,M\}, \, i \in \{1,\dots,\bar{n}_m\},\\\label{eq:energyConst2}
	&\sum_{i = 1}^{\bar{n}_m}\left[\left(x_{m,i}-x_m\right)^2 + \left(y_{m,i}-y_m\right)^2 \right] \leq c_m,\, \forall m\in \mathcal{M},\\\label{eq:xspeedconst2}
	&|x_{i+1} - x_{i}|\leq v (t_{i+1} - t_i),\, \forall i \in \{ 0,\dots,\sum_{m=1}^{M}\bar{n}_m \},\\ &|y_{i+1} - y_{i}|\leq v (t_{i+1} - t_i),\, \forall i \in \{0,\dots,\sum_{m=1}^{M}\bar{n}_m \}.\label{eq:yspeedconst2}
\end{align}

In problem \eqref{eq:minSpeed}, we consider that the update time instants are known and set to be the ones derived in Theorem \ref{theorem:lowerbound}. The solution should satisfy the node's energy and UAV's location constraints in \eqref{eq:energyConst2}-\eqref{eq:yspeedconst2}. Also, in \eqref{eq:xspeedconst2} and \eqref{eq:yspeedconst2}, we consider that the maximum allowable speed of the UAV in directions $ x $ and $ y $ are equal which is a practical assumption because the UAV's motors are usually identical. It can be easily shown that the problem in \eqref{eq:minSpeed} is a linear program with convex constraints which can be solved using interior point techniques \cite{optimization_book}. However, the solution may not give us a closed-form expression on the minimum required speed for the UAV. A closed-form expression could be helpful in choosing the type of UAV or defining the parameters of the optimization problem, especially the node weights. Therefore, in the following, we derive a closed form expression for the upper bound on the UAV's minimum required speed. To this end, we define a \emph{scheduling policy}, $ \boldsymbol{u} $, which is a vector that contains the indices of the scheduled nodes and is ordered based on the scheduled time instants of the nodes. For instance, letting $ u_i $ and $ u_{i+1} $ be the $ i $-th and $ i+1 $-th elements of $ \boldsymbol{u} $, the node $ u_i $ will be scheduled for transmission one step prior to $ u_{i+1} $. Note that for every $ \boldsymbol{u}, $ there exists a vector $ \boldsymbol{t}_u $, and, hence, we will have $ \bar{G}(\boldsymbol{t}_u) \equiv \bar{G}(\boldsymbol{u})  $. Also let us define $ \bar{\boldsymbol{u}} $ as the scheduling policy that keeps the order of updates for the optimal time instants derived in Theorem \ref{theorem:lowerbound}. In the following, we derive the upper bound for the UAV's minimum required speed.
\begin{prop}\label{proposition:vmin}
	If no two nodes $ m $ and $ p $ exist such that $ \bar{n}_m + 1 $ is a divisor of $ \bar{n}_p + 1 $ or vice versa, then the UAV's minimum speed needed to achieve the minimum NWAoI is upper bounded by:
	\begin{align}\label{eq:vmin}
		{v}_{\textrm{min}} \leq \bar{v}_{\textrm{min}} \triangleq \max_{i} \left\{\frac{\left|y_{\bar{u}_i+1} - y_{\bar{u}_i}\right|}{t_{i+1} - t_i},\frac{\left|x_{\bar{u}_i+1} - x_{\bar{u}_i}\right|}{t_{i+1} - t_i},\,\, \forall\, i = 0,\dots,\sum_{m=1}^{M}\bar{n}_m\right\},
	\end{align}
	such that $\forall i \in \{0,\dots,\sum_{m=1}^{M}\bar{n}_m \}$, $ \{t_i\} $ are the time instants derived in Theorem \ref{theorem:lowerbound}.
\end{prop}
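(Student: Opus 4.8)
The plan is to establish \eqref{eq:vmin} as an \emph{upper} bound by exhibiting a single explicit feasible point of the minimum-speed program \eqref{eq:minSpeed} whose required speed equals $\bar{v}_{\textrm{min}}$; since $v_{\textrm{min}}$ is the minimum of the speed over all feasible placements, the speed attained at any particular feasible placement is automatically an upper bound on it. Concretely, I would keep the update time instants fixed to the periodic schedule $t_{i,m}=i\tau/(\bar{n}_m+1)$ of Theorem~\ref{theorem:lowerbound}, merge all these instants into a single increasing sequence $\{t_i\}$ with induced order $\bar{\boldsymbol{u}}$, and choose the candidate update locations to be the node locations themselves, i.e. $x_{m,i}=x_m$ and $y_{m,i}=y_m$ for all $m$ and $i$. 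Physically, this is the policy in which the UAV hovers directly above the scheduled node at every update instant.

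Next I would verify feasibility of this candidate. The energy constraint \eqref{eq:energyConst2} holds because its left-hand side vanishes identically (each summand $(x_{m,i}-x_m)^2+(y_{m,i}-y_m)^2$ is zero), while its right-hand side $c_m = E^{\textrm{max}}_m\beta_0/(\sigma^2(2^{S/B}-1)) - \bar{n}_m h^2$ is nonnegative \emph{exactly} by the defining floor in $\bar{n}_m = \lfloor E^{\textrm{max}}_m\beta_0/(\sigma^2(2^{S/B}-1)h^2)\rfloor$. Thus hovering over the nodes always lies within the energy budget associated with $\bar{n}_m$ updates, which is the very reason $\bar{n}_m$ was defined this way.

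For this placement the per-step constraints \eqref{eq:xspeedconst2}--\eqref{eq:yspeedconst2} reduce to $v \ge |x_{\bar u_{i+1}}-x_{\bar u_i}|/(t_{i+1}-t_i)$ and $v \ge |y_{\bar u_{i+1}}-y_{\bar u_i}|/(t_{i+1}-t_i)$, since consecutive update locations are just consecutive scheduled node positions. The smallest single $v$ satisfying all of these simultaneously is the maximum of these ratios over all steps and both axes, which is precisely $\bar{v}_{\textrm{min}}$ in \eqref{eq:vmin}. Feasibility of this placement then yields $v_{\textrm{min}} \le \bar{v}_{\textrm{min}}$. Note this is consistent with the preceding discussion: the true optimum of \eqref{eq:minSpeed} can do strictly better by spending the residual energy to shift updates off the node tops and shrink the displacements, so hovering-on-top is only a (feasible) upper-bounding policy.

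The step I expect to be the main obstacle is ensuring that $\bar{v}_{\textrm{min}}$ is finite and well-defined, i.e. that every denominator $t_{i+1}-t_i$ is strictly positive, since a coincidence of two distinct update instants would force $t_{i+1}-t_i=0$ and make the bound vacuous. This is where the divisibility hypothesis is meant to enter: two instants coincide when $i/(\bar{n}_m+1)=j/(\bar{n}_p+1)$ for admissible $i,j$, i.e. when $\bar{n}_m+1$ and $\bar{n}_p+1$ fail to be coprime. I would use the hypothesis to rule out such coincidences and conclude $t_i<t_{i+1}$ for every $i$. Here I would scrutinize the arithmetic carefully, because the stated divisibility condition excludes one schedule being entirely nested inside another, yet taken literally a shared common factor (e.g. $\bar{n}_m+1=4$, $\bar{n}_p+1=6$, which align at $t=\tau/2$) could still collide a subset of instants; I would check whether pairwise coprimality of the family $\{\bar{n}_m+1\}$ is in fact the operative requirement guaranteeing all denominators are positive.
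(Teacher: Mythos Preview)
Your approach is essentially identical to the paper's: both fix the UAV to hover directly above each scheduled node ($x_{m,i}=x_m$, $y_{m,i}=y_m$), note this makes the energy constraint trivially satisfied, and then read off the smallest admissible speed as the maximum of the per-segment displacement-over-time ratios, which is exactly $\bar{v}_{\textrm{min}}$. Your scrutiny of the divisibility hypothesis is well placed---the paper's proof invokes precisely the same ``no $\bar{n}_m+1$ divides $\bar{n}_p+1$'' condition to conclude $t_i\neq t_{i+1}$ without addressing the common-factor-but-no-divisibility case you flag (e.g.\ $4$ and $6$), so your caution there matches, and arguably exceeds, the level of rigor in the original argument.
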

\begin{proof}
	Let $ x_{m,i} = x_m $ and $ y_{m,i} = y_m $ for $ m \in \mathcal{M}$ and $i \in \{1,\dots,\bar{n}_m \}$, meaning that the UAV updates the nodes when it is on top of them. Then, the UAV needs to travel between the top of two nodes in less than the difference between two consecutive time instants. Therefore, the distances covered by the UAV between two consecutive updates in $ x $ and $ y $ directions will be $ \left|x_{\bar{u}_i+1} - x_{\bar{u}_i}\right| $ and $ \left|y_{\bar{u}_i+1} - y_{\bar{u}_i}\right| $, respectively. Moreover, since $ t_i $ is the time instant of the $ i $-th update using the policy $ \bar{\boldsymbol{u}} $, then the speed requirements for the travel before $ i $-th update are $ \frac{\left|x_{\bar{u}_i+1} - x_{\bar{u}_i}\right|}{t_{i+1}-t_i} $ and $ \frac{\left|y_{\bar{u}_i+1} - y_{\bar{u}_i}\right|}{t_{i+1}-t_i} $. Therefore, the UAV's speed has to be at least the maximum value of the required speed for all travels, which yields \eqref{eq:vmin}. However, if there exists a time instant $ t_i $ such that $ t_i = t_{i+1} $, then UAV's speed tends to be infinity which is infeasible. Therefore, we need to have $ \frac{i\tau}{\bar{n}_m+1} \neq \frac{j\tau}{\bar{n}_p+1} $, for all pairs of nodes $ m $ and $ p $. To this end, no two nodes $ m $ and $ p $ must exist such that $ \bar{n}_m + 1 $ is divisor of $ \bar{n}_p + 1 $ or vice versa, which completes the proof.
\end{proof}
Proposition \ref{proposition:vmin} derives a minimum value on UAV's speed so as to guarantee achieving the lower bound on NWAoI if any two nodes do not have equal update time instants. If the UAV needs to update two nodes at exactly the same time instant, then the required speed can be derived by solving the problem in \eqref{eq:vmin}. If problem \eqref{eq:vmin} does not yield a solution then the lower bound NWAoI is not achievable. In this case, a policy different than $ \bar{\boldsymbol{u}} $ should be fed to the problem in \eqref{eq:objrelaxed} to find the update time instants and locations. 

Although problem \eqref{eq:objrelaxed} can be solved, it requires the knowledge of scheduling policy, i.e., each node's number and order of updates. However, finding the scheduling policy is challenging especially when the nodes are equipped with batteries of large capacities since the nodes may send updates more frequently in such case. In fact, for known values of $ n_{1}, \dots,n_{M} $, there exists $ \frac{\left(\sum_{m=1}^{M} n_m\right) !}{\prod_{m=1}^{M}n_m!} $ different orders for updating nodes. Therefore, using a brute force method, the number of times one should solve \eqref{eq:objrelaxed} to find the optimal solution for the original problem in \eqref{eq:problem} is: 
\begin{align}\label{eq:maxupdate}
	\sum_{n_1 = 1}^{\bar{n}_1} \cdots \sum_{n_M = 1}^{\bar{n}_M} \frac{\left(\sum_{m=1}^{M} n_m\right) !}{\prod_{m=1}^{M}n_m!}.
\end{align}

From \eqref{eq:maxupdate}, we can see that finding the optimal scheduling policy using brute force has a combinatorial form which is computationally expensive. Hence, in the following, we propose a similar NCRL method to that in \cite{bello2016neural} and \cite{khalil2017learning} in order to find the optimal scheduling policy for the nodes without using brute force.
%  \cite{abd2018role,zhou2018joint}. 
% We use the concept of AoI to quantify the freshness of information status at the UAV.

\section{Neural Combinatorial Based Deep Reinforcement Learning for Optimal Scheduling}\label{sec:NCRL}
In order to find the optimal scheduling policy for the nodes, we first propose an NCRL approach \cite{bello2016neural} and \cite{khalil2017learning}. Unlike the DRL solution proposed in our early work \cite{UAV_AoI_DRL1} in which an \emph{environment} is defined as the area within which the nodes are located, our proposed NCRL considers the problem in \eqref{eq:objrelaxed} as an environment that receives a policy $ \boldsymbol{u} $ and outputs the NWAoI, $ \bar{G}(\boldsymbol{u}) $. In particular, we consider three main elements for this problem: \emph{state} of the environment, \emph{action} of the UAV, and the \emph{reward} from the environment as described in the following.
\subsection{State, Action, Reward, and Optimal Scheduling Policy Definition}
The state of the environment can be defined as a matrix $ \boldsymbol{S}_{n} $ that has $ n +1 $ columns: 1) the first column contains the initial battery levels and the time instant of operation, and 2) every column after the first column contains the energy levels of the nodes after an update as well as the time instant of that update. In other words, for an update policy $ \boldsymbol{u} $, the $(i-1)$-th column of $ \boldsymbol{S}_n $ represents the energy levels of the nodes before node $ u_i $ is updated. Formally, the $ i $-th column of  $ \boldsymbol{S}_{n} $, will be:
\begin{align}\label{eq:state}
	\boldsymbol{s}_{n,i} \triangleq \left[E_1(t_i),\dots,E_M(t_i),t_i\right]^T.
\end{align}

Furthermore, the initial state is defined as $ \boldsymbol{S}_{0} = \boldsymbol{s}_{0,1} \triangleq \left[E^{\textrm{max}}_{1},\dots,E^{\textrm{max}}_{M},0\right]^T $ which captures the available energy of the nodes in the beginning of the problem where the first time instant is set to be $ 0 $. Also, note that $ E_m(t_i) $ and $ t_i $ can be obtained for $ 0 \leq i \leq n $ and $ 1 \leq m \leq M $ by solving the problem in \eqref{eq:objrelaxed} using the scheduling policy vector $ \boldsymbol{u} $. Therefore, the state space of this problem is the space of all 2-D matrices with $m+1$ rows such that any element at row $m $ for $m\in \mathcal{M}$ is in $\left[0,E^{\textrm{max}}_{m}\right]$ and any element at row $M+1$ is in $\left[0,\tau\right]$.

At any state of the problem, the UAV can either choose to schedule a node for sending an update packet or terminate the policy. Therefore, an action $ a_n $ at state $ \boldsymbol{S}_{n-1} $ can get any integer value in the action set $ \mathcal{A} \triangleq \left\{0,\dots,M\right\} $, such that $ a_n = m > 0 $ means that the node $ m $ is scheduled for transmission; $ a_n = 0 $ terminates the policy, i.e., no new update transmissions will be added to the current policy. Let $ \boldsymbol{u}_{n-1} $ be a policy that contains $ n-1 $ node indexes such that it transitions state $ \boldsymbol{S}_0 $ to $ \boldsymbol{S}_{n-1} $. Then, at every state $ \boldsymbol{S}_{n-1} $, action $ a_n > 0 $ transitions $ \boldsymbol{S}_{n-1} $ to $ \boldsymbol{S}_{n} $ such that $ \boldsymbol{S}_{n} $ is the transition from $ \boldsymbol{S}_{0} $ using policy $ \boldsymbol{u}_{n} \triangleq [\boldsymbol{u}_{n-1} , a_n]$. In other words, at every state of the problem, the UAV adds a node to the end of the scheduling policy, solves the problem in \eqref{eq:objrelaxed}, and transitions the state of the problem to a new one. While transitioning the state of the problem, the UAV receives a new NWAoI value from \eqref{eq:objrelaxed} and uses it as a \emph{reward} to derive the optimal scheduling policy. In particular, we define the reward for every action as the reduction in the NWAoI value, which can be expressed as:
\begin{align}\label{eq:reward_def}
	r_n(\boldsymbol{S}_{n-1},\boldsymbol{u}_{n-1},a_n) = \bar{G}(\boldsymbol{u}_{n-1}) - \bar{G}(\boldsymbol{u}_n).
\end{align}

We also define $ \bar{G}(\boldsymbol{u}_{0}) = 1 $ since when policy $ \boldsymbol{u} $ is empty, i.e., none of the nodes will be scheduled for update transmissions in that case, and, hence, the NWAoI will have a maximum value of 1. Furthermore, we consider that the reward of the termination action $ a _n= 0 $ is 0, i.e., $r_n(\boldsymbol{S}_{n-1},\boldsymbol{u}_{n-1},a_n = 0) = 0 $. Using the definition of the reward in \eqref{eq:reward_def}, we can see that the NWAoI for a policy $ \boldsymbol{u}_n $ can be expressed as:
\begin{align}\label{obj_DRL}
	\bar{G}(\boldsymbol{u}_n) = 1 - \sum_{k = 1}^{n} r_k(\boldsymbol{S}_{k-1},\boldsymbol{u}_{k-1},a_k).
\end{align}

Therefore, the optimal policy that minimizes \eqref{obj_DRL} (which is also the objective function of the problem in \eqref{eq:objrelaxed}) can be written as follows:
\begin{align}\label{eq:Optim_policy}
	\boldsymbol{u}^\star = \argmax_{n,u_n}\sum_{k = 1}^{n} r_k(\boldsymbol{S}_{k-1},\boldsymbol{u}_{k-1},a_k).
\end{align}

 Owing to the nature of evolution of the problem, represented by $ \boldsymbol{S}_{n-1} $, $ a_n $, $ \boldsymbol{u}_n $, $ \boldsymbol{S}_{n} $, and $ r_n(\boldsymbol{S}_{n-1},\boldsymbol{u}_{n-1},a_n) $, the problem can be modeled as a finite-horizon MDP with finite state and action spaces. However, due to the curse of extremely high dimensionality in the state space, it is computationally infeasible to obtain $\boldsymbol{u}^\star$ using the standard finite-horizon DP algorithm \cite{powell2007approximate}. Motivated by this, we propose next a deep RL algorithm for solving (\ref{eq:Optim_policy}). Deep RL is suitable here because it can reduce the dimensionality of the large state space while learning the optimal policy at the same time using neural combinatorial optimization methods as in \cite{bello2016neural} and \cite{khalil2017learning}.

%\subsection{Deep Reinforcement Learning for Weighted Sum-AoI Minimization}
\begin{comment}
\begin{theorem}
The maximum and minimum total expected costs of the system, for a case with identical ground nodes, $A_{\max,m} = A_{\max} \geq M$, and $A_{m}(1) = 1, \forall m$, are given by:
\begin{align}
\min_{\pi} G^{\pi} &= \frac{(2M+1)(M-1)}{4}-\sum_{n=1}^{M-1} \frac{n^2}{2M}\nonumber\\& +\frac{ (\tau - (M+1))(M+1)}{2},\label{eq:min}\\
\max_{\pi} G^{\pi} &= \frac{A_{\max}(A_{\max}-1)}{2} + (\tau - (A_{\max}-1) )A_{\max}.\label{eq:max}
\end{align}\normalsize
\end{theorem}
\begin{IEEEproof}
The minimum total expected cost is reached when the UAV can receive an update packet from the ground node with maximum current AoI value at every time slot. In this case, we have:
\begin{align}
\sum_{m=1}^M\lambda_m A_m(n) = 
\begin{cases}
1/M\left[nM - \frac{n(n-1)}{2}\right], & n<M\\
\frac{(M+1)}{2},&n\geq M.
\end{cases}
\end{align}
By summing this value over all time slots, we obtain \eqref{eq:min}.
The maximum total expected cost is reached when the UAV cannot receive update packets over all time slots. In this case, we have:
\begin{align}\label{eq:maxcases}
\sum_{m=1}^M\lambda_m A_m(n) =
\begin{cases}
n, & n<A_{\max},\\
A_{\max}, & n\geq A_{\max}.
\end{cases}
\end{align}
Thus, by summing \eqref{eq:maxcases} over all time slots, we get \eqref{eq:max}.
\end{IEEEproof}
\end{comment}

\subsection{Deep Reinforcement Learning Algorithm}\label{sec:DQN}
The proposed deep RL algorithm has two components:
(i) an artificial neural network (ANN), that reduces the dimension of the state space by extracting its useful features and (ii) an RL component, which is used to find the best policy based on the ANN's extracted features, as shown in Fig. \ref{fig:deepRL}. To derive the policy that maximizes the total expected reward of the system, we use a $Q$-learning algorithm \cite{powell2007approximate}. In this algorithm, we define a state-action value function $Q(\boldsymbol{S}_{n-1},a_n)$ which is the expected reward of the system starting at state $\boldsymbol{S}_{n-1}$, performing action $a_n$ and following policy $\boldsymbol{u}$. In $Q$-learning algorithm, we try to estimate the $Q$-function using any policy that maximizes the future reward. To this end, we use the so-called Bellman update rule:
\begin{align}\label{eq:Bellman}
Q_{k+1}\left(\boldsymbol{S}_{n-1},a_n\right) = Q_{k}\left(\boldsymbol{S}_{n-1},a_n\right) + \beta \Big(r_n(\boldsymbol{S}_{n-1},\boldsymbol{u}_{n-1},a_n)+ \gamma \max_{\alpha} Q_{k}\left(\boldsymbol{S}_{n},\alpha\right) - Q_{k}\left(\boldsymbol{S}_{n-1},a_n\right) \Big),
\end{align}
where $\beta$ is the learning rate, and $\gamma$ is a discount factor. The discount factor can be set to a value between 0 and 1 if the UAV's task is \emph{continuing} which means the task will never end, and, hence, the current reward will have a higher value compared to the unknown future reward. However, we have here two terminal cases: 1) when problem \eqref{eq:objrelaxed} does not have a solution for a scheduling policy $ \boldsymbol{u}_n $ and 2) when $ a_n = 0 $ ( the policy is terminated). Therefore, our problem is \emph{episodic}, and so we set $\gamma = 1$. This aligns with the optimal policy definition in \eqref{eq:Optim_policy} in which all of the steps of an episode until the terminal state have equal weights in the evaluation of the policy.
% because all time slots have equal values for the UAV.
\begin{figure}[t!]
    \centering
    \includegraphics[width=0.55\columnwidth]{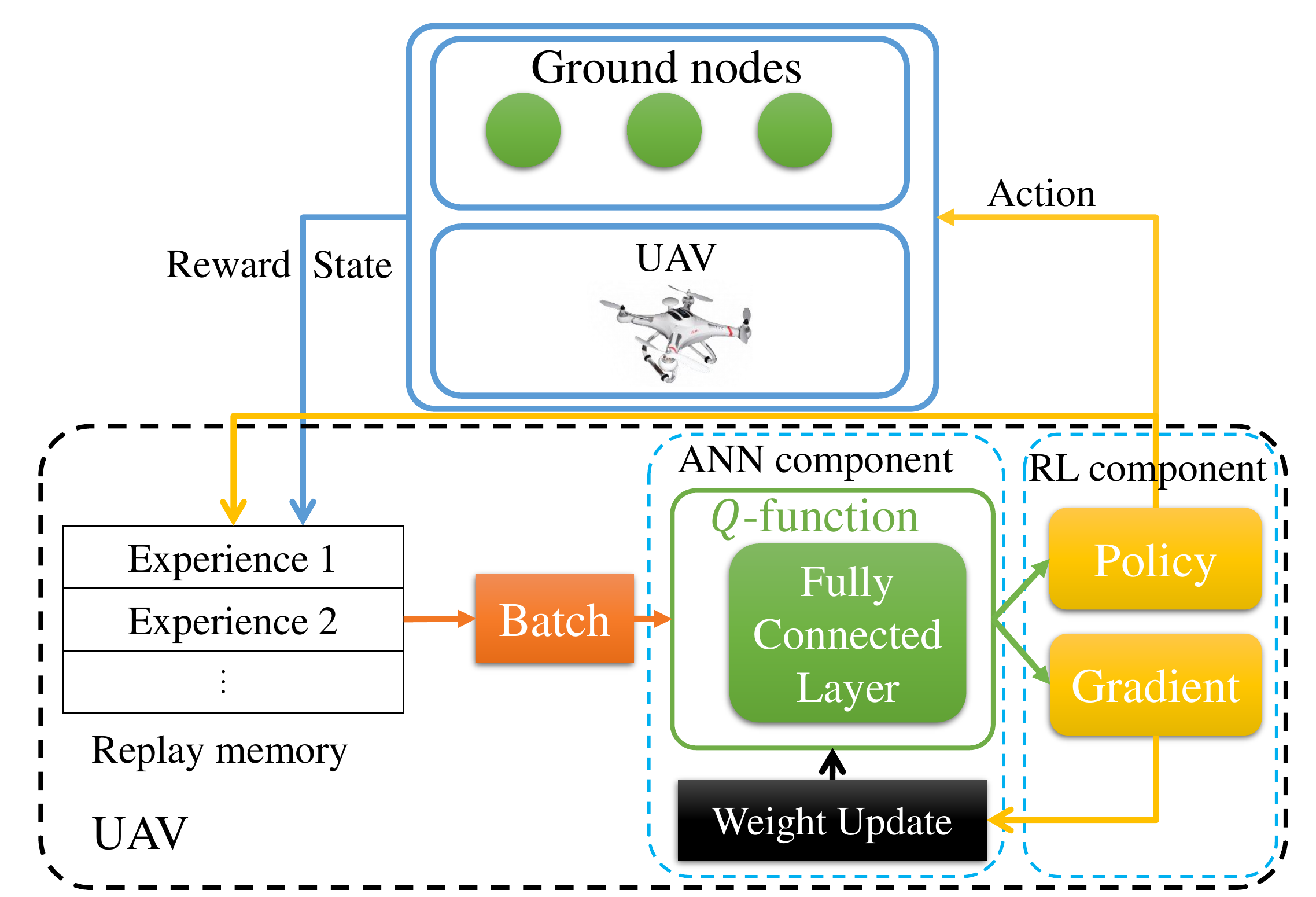}
    \caption{The deep RL architecture.}
    \label{fig:deepRL}
\end{figure}

Since, using \eqref{eq:Bellman}, the UAV always has an estimate of the $Q$-function, it can \emph{exploit} the learning by taking the action that maximizes the reward. However, when learning starts, the UAV does not have confidence on the estimated value of the $Q$-function since it may not have visited some of the state-action pairs. Thus, the UAV has to \emph{explore} the environment (all state-action pairs) to some degree. To this end, an $\epsilon$-greedy approach is used where $\epsilon$ is the probability of exploring the environment at the current state \cite{mnih2015human}, i.e., taking a random action with some probability. Since the need for exploration goes down with time, one can reduce the value of $\epsilon$ to $0$ as the learning goes on to ensure that the UAV chooses the optimal action rather than explore the environment.

The iterative method in \eqref{eq:Bellman} can be applied efficiently for the case in which the state space is small. However, the extremely high dimension of the state space in our problem makes such an iterative approach impractical, since it requires a large memory and will have a slow convergence rate. Also, this approach cannot be generalized to unobserved states, since the UAV must visit every state and take every action to update every state-action pair \cite{powell2007approximate}. Thus, we employ ANNs which are very effective at extracting features from data points and summarizing them to smaller dimensions. We use a DQN approach in \cite{mnih2015human,FerdowsiIoT,FerdowsiLSTMDRL} where the learning steps are the same as in $Q$-learning, however, the $Q$-function is approximated using an ANN $Q(\bar{\boldsymbol{s}},a|\boldsymbol{\theta})$, where $ \bar{\boldsymbol{s}} $ is a vector representation of the state and $\boldsymbol{\theta}$ is the vector containing the weights of the ANN. 

In our problem, the states have a matrix form with fixed number of rows and varying number of columns. However, in order to apply the DQN approach, the state matrix in our problem must be mapped into a vector representation with fixed number of elements. To do so, we propose two methods as follows. First, for scenarios with small number of nodes, in which the size of state matrix $ \boldsymbol{S}_n $ is not very large, the last column of the state $ \boldsymbol{s}_{n,n} $ can be used as the state representation since it captures the final energy levels of the nodes after all of the updates. Second, for large-scale scenarios, there will be spatio-temporal interdependencies between the nodes and their update time instants. Thus, an ANN-based \emph{autoencoder} can be used to map the varying size states to a fixed size vector (which will then be used in the DQN) \cite{sutskever2014sequence}. This autoencoder will be studied in detail in the following section. After deriving the state representation vector, $ \bar{\boldsymbol{s}} $, a fully connected (FC) layer, as in \cite{mnih2015human}, is used to extract abstraction of the state representation. In the FC, every artificial node of a layer is connected to every artificial node of the next layer via the weight vector $\boldsymbol{\theta}$. The goal is to find the optimal values for $\boldsymbol{\theta}$ such that the ANN will be as close as possible to the optimal $Q$-function. To this end, we define a loss function for any set of $\left(\bar{\boldsymbol{s}}_{n},a_n,r_n,\bar{\boldsymbol{s}}_{n+1}\right)$, as follows:
\begin{align}
    L(\boldsymbol{\theta}_{k+1}) = \Big[r_n + \gamma \max_{\alpha'} Q(\bar{\boldsymbol{s}}_{n},\alpha'|\boldsymbol{\theta}_{k}) - Q(\bar{\boldsymbol{s}}_{n-1},a_n|\boldsymbol{\theta}_{k+1})\Big]^2,
\end{align}
where subscript $k+1$ is the episode at which the weights are updated.
In addition, we use a \emph{replay memory} that saves the evaluation of the state, action, and reward of past \emph{experiences}, i.e., past state-actions pairs and their resulting rewards. Then, after every episode, we sample a batch of $b$ past experiences from the replay memory and we find the gradient of the weights using this batch as follows:
\begin{align}\label{eq:gradient}
\nabla_{\boldsymbol{\theta}_{k+1}} L(\boldsymbol{\theta}_{k+1}) = \Big[r_n + \gamma \max_{\alpha'} Q(\bar{\boldsymbol{s}}_{n},\alpha'|\boldsymbol{\theta}_{k})- Q(\bar{\boldsymbol{s}}_{n-1},a_n|\boldsymbol{\theta}_{k+1}) \Big]\times\nabla_{\boldsymbol{\theta}_{k+1}}Q(\bar{\boldsymbol{s}}_{n-1},a_n|\boldsymbol{\theta}_{k+1}).
\end{align}

Using this loss function, we train the weights of the ANN, $ \boldsymbol{\theta} $. It has been shown that using the batch method and replay memory improves the convergence of deep RL \cite{mnih2015human}.
Algorithm \ref{Algorithm:DeepRL} summarizes our proposed solution and Fig. \ref{fig:deepRL} shows the architecture of the deep RL algorithm. 
\begin{algorithm}[t]
    \footnotesize
	\caption{Deep RL for NWAoI minimization}
	\begin{algorithmic}[1] 
		\State Initialize a \emph{replay memory} that stores the past experiences of the UAV and an ANN for $ Q $-function. Set $k=1$.
		\State \textbf{Repeat:}
		\State \quad Set $n=1$, initialize an empty policy $ \boldsymbol{u}_0 = []$ and NWAoI $ \bar{G}(\boldsymbol{u}_0) = 1 $ and observe the initial state representation $ \bar{\boldsymbol{s}}_1 $.
		\State \quad \textbf{Repeat:}
		\State \quad \quad Select an action $ a $:
		\State \quad \quad \quad select a random action $a \in \mathcal{A}$ with probability $ \varepsilon $ ,
		\State \quad \quad \quad otherwise select $ a = \argmin_{\alpha} Q(\bar{\boldsymbol{s}}_n,\alpha|\boldsymbol{\theta}_k) $.
		\State \quad \quad Append action $ a $ to the end of policy $ \boldsymbol{u}_{n-1} $ as $ \boldsymbol{u}_{n} = \left[\boldsymbol{u}_{n-1},a\right] $. 
		\State \quad \quad Solve \eqref{eq:objrelaxed} using $ \boldsymbol{u}_{n} $ and find $ \bar{G}(\boldsymbol{u}_n) $.
		\State \quad \quad Observe the reward $ r_n = \bar{G}(\boldsymbol{u}_{n-1}) - \bar{G}(\boldsymbol{u}_n) $ and the new state $ \bar{\boldsymbol{s}}_{n+1} $.
		\State \quad \quad Store \emph{experience} $ \left\{\bar{\boldsymbol{s}}_{n},a_n,r_n,\bar{\boldsymbol{s}}_{n+1}\right\} $ in the replay memory.
		\State \quad \quad $ n = n+1 $
		\State \quad \textbf{Until} $ \bar{\boldsymbol{s}}_{n+1} $ is a terminal state.
		\State \quad Sample a batch of $b$ random experiences $ \left\{\bar{\boldsymbol{s}}_{\eta},a_{\eta},r_{\eta},\bar{\boldsymbol{s}}_{\eta+1}\hspace{-0.5mm}\right\} $ from the replay memory.
		\State \quad Calculate the \emph{target} value $ t $:
		\State \quad \quad If the sampled experience is for terminal state then $ t=r_{\eta} $,
		\State \quad \quad Otherwise $ t=r_{\eta} + \gamma \min_{\alpha'} Q(\bar{\boldsymbol{s}}_{\eta + 1},\alpha'| \boldsymbol{\theta}_k) $.
		\State \quad Derive the gradients for all of the episodes in the batch using \eqref{eq:gradient}.
		\State \quad Train the network $ Q $ using the average of gradients.
		\State \quad $ k = k+1 $.
		\State \textbf{Until} convergence.
	\end{algorithmic}
	\normalsize
	\label{Algorithm:DeepRL}
\end{algorithm}
%As already discussed, the use of the last column of $ \boldsymbol{S}_n $ as the state representation in the DQN approach only works in scenarios with small number of nodes. In the next section, we propose a recurrent neural network (RNN) architecture that can make the DQN approach efficiently learn the optimal policy in large-scale scenarios. efficiently learn the optimal policy in large-scale sc extracts spatio-temporal interdependencies between the node energy levels and the update time instants in order to feed into the DQN algorithm

As already discussed, the proposed DQN approach can work for state representations with fixed number of elements. However, the state of the problem, $ \boldsymbol{S}_n $, in our setup has a matrix form with varying number of columns. Although using the last column of $ \boldsymbol{S}_n $ as the state representation may work in scenarios with small number of nodes, we need to capture spatio-temporal interdependence between the columns of $ \boldsymbol{S}_n $ for large-scale scenarios. Therefore, we next propose a recurrent neural network (RNN) architecture that extracts spatio-temporal interdependencies between the node energy levels and the update time instants in order to feed into the DQN algorithm for such large-scale scenarios.

% and maps the state to a fixed size vector of state representation $ \bar{\boldsymbol{s}}_n $
\subsection{Long Short-Term Memory-based Structure}
%Since the state of the problem has a 2-dimensional matrix form with a varying number of columns, we have to compress the state into a one-dimensional vector with fixed number of elements in order to use it in DQN algorithm. One way to do this is to use only the last column of the state, $ \boldsymbol{s}_{n,n} $, as the state representation. This is reasonable for few nodes since the last column of the state represents the energy levels of the nodes after all of the actions of a policy has been applied to the problem. However, when the number of nodes grows larger, the state may have a large number of columns. Thus, simply choosing the last column of the state matrix as the state representation will not capture the spatio-temporal interdependencies between the node energy levels at different update time instants. To this end, next, we study a special RNN architecture named LSTM cells that can learn time interdependence between the columns of the state and map them into a fixed size 1-dimensional state representation.
We study a special RNN architecture, named LSTM cells \cite{speech2013Graves}, that can learn time interdependence between the columns of the state and map them into a fixed size 1-dimensional state representation. In particular, LSTMs have three main components as shown in Fig. \ref{fig:LSTM}: 1) a forget gate which receives an extra input called the cell state input and learns how much it should memorize or forget from the past, 2) an input gate which aggregates the output of past steps and the current input and passes it through an activation function as done in a conventional RNN, and 3) an output gate which combines the current cell state and the output of input gate and generates the LSTM output \cite{FerdowsiLSTM}. Formally, the relationship between different parts of the LSTM block in Fig. \ref{fig:LSTM} can be expressed as follows:
\begin{align}
	\boldsymbol{f}_i &= \sigma(W_f\left[\boldsymbol{h}_{i-1}^T,\boldsymbol{s}_{n,n-i}^T\right]^T+ \boldsymbol{b}_f),\\
	\boldsymbol{r}_i &= \sigma(W_r\left[\boldsymbol{h}_{i-1}^T,\boldsymbol{s}_{n,n-i}^T\right]^T+ \boldsymbol{b}_r),\\
	\tilde{\boldsymbol{c}}_i &= \tanh\left(W_c\left[\boldsymbol{h}_{i-1}^T,\boldsymbol{s}_{n,n-i}^T\right]^T+ \boldsymbol{b}_c\right)\\
	\boldsymbol{c}_i &= \boldsymbol{f}_i*\boldsymbol{c}_{i-1} + \boldsymbol{r}_i * \tilde{\boldsymbol{c}}_i,\\
	\boldsymbol{o}_i &= \sigma\left(\boldsymbol{W}_o\left[\boldsymbol{h}_{i-1}^T,\boldsymbol{s}_{n,n-i}^T\right]^T+\boldsymbol{b}_o\right)\\
	\boldsymbol{h}_i &= \boldsymbol{o}_i*\tanh(\boldsymbol{c}_i),
\end{align}
where $ \sigma(x) \triangleq \frac{1}{1+e^{-x}} $ is the sigmoid function, $ * $ represents element-wise multiplication, $ \boldsymbol{W}_f $,  $ \boldsymbol{W}_r $, $ \boldsymbol{W}_c $, and $ \boldsymbol{W}_o $ are weight matrices, and $ \boldsymbol{b}_f $, $ \boldsymbol{b}_r $, $ \boldsymbol{b}_c $, and $ \boldsymbol{b}_o $  are bias matrices at the forget, input, and output gates of the LSTM. Given a state $ \boldsymbol{S}_n $, the LSTM uses every column of $ \boldsymbol{S}_n $, $ \boldsymbol{s}_{n,n-i} $ as an input and iteratively calculates an output sequence for $ i \in\{ 1,\dots,n\} $. Next, we show how the cell state and output values can be used as a state representation in our problem.

\begin{figure}[t]
\centering
\includegraphics[width=0.6\columnwidth]{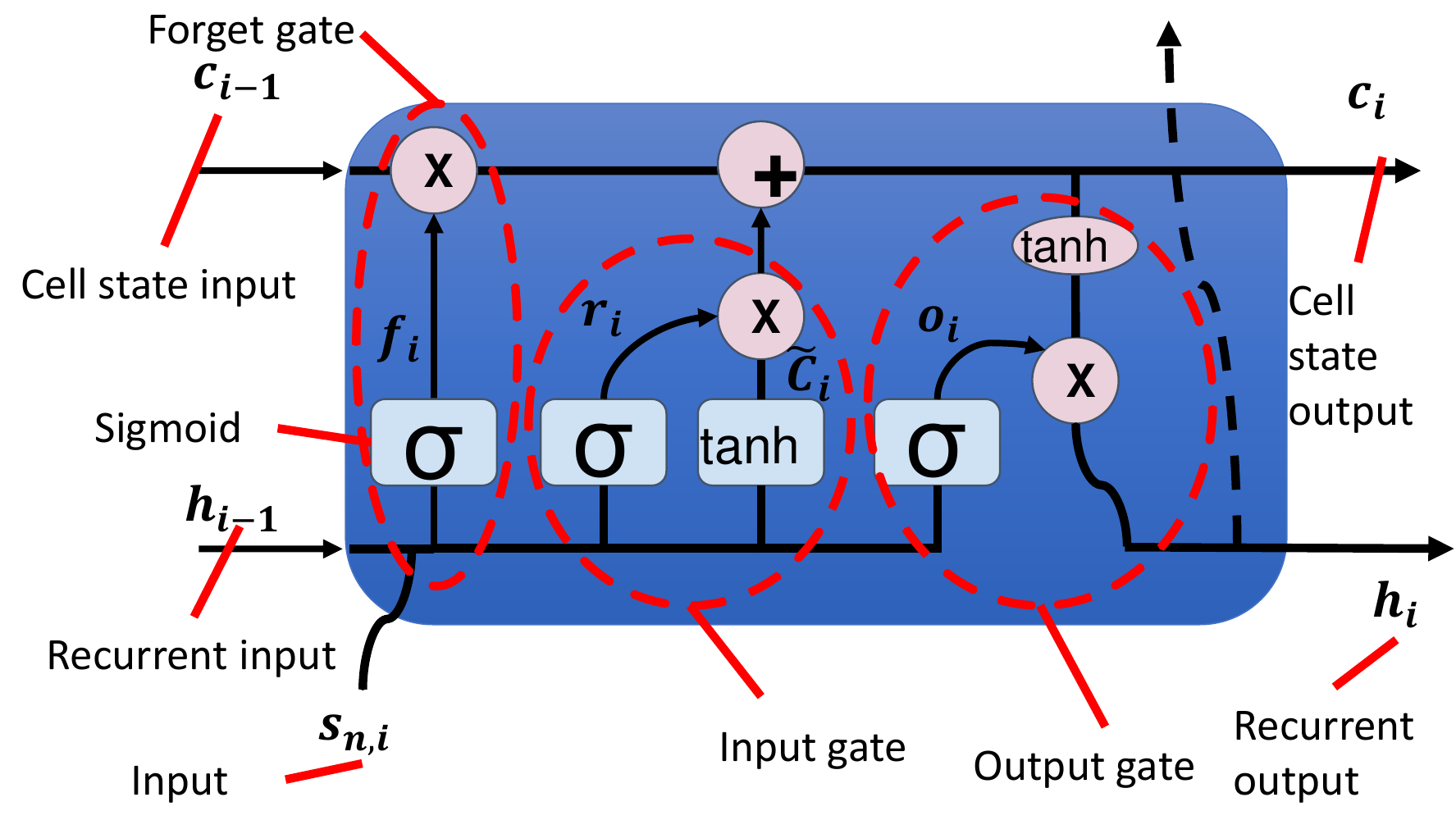}
\caption{A generic LSTM block architecture.}
\label{fig:LSTM}
\end{figure}

\subsection{LSTM-based Autoencoder Using a Sequence-to-Sequence Model}
The LSTM blocks can be used to map the matrix $ \boldsymbol{S}_{n} $ to a vector with fixed size \cite{sutskever2014sequence}. To this end, we use the sequence-to-sequence architecture in Fig. \ref{fig:LSMTAutoEncoder}. Sequence-to-sequence models are commonly used for translation from a language to another language \cite{sutskever2014sequence}. In this architecture, we use two LSTMs: one to receive an input sequence of words (a sentence) in a primary language and one to generate a new sequence of words (a sentence) in a secondary language. Every word in the sequence from primary language is fed to the LSTM iteratively until reaching the last word in the sequence. Then the cell state output, $ \boldsymbol{c}_n $, and recurrent output $ \boldsymbol{h}_n $, are concatenated into a vector, $ \bar{\boldsymbol{s}}_n $. Then, $ \boldsymbol{c}_n $ and $ \boldsymbol{h}_n $ are fed into the second LSTM as the initial cell state and recurrent inputs. Now, the input sequence to the second LSTM will be the sequence of the words from the secondary language. During the training of this model, the goal is to find optimal values for the weights and biases of the LSTMs such that, in essence, $ \bar{\boldsymbol{s}}_n $ represents the \emph{meaning} of the sentence in the primary language. We use the same concept to learn a fixed size representation of our state space as shown in Fig. \ref{fig:LSMTAutoEncoder}. 

In Fig. \ref{fig:LSMTAutoEncoder}, we use $ \boldsymbol{S}_n $ as the input sequence for the first and second LSTMs. In this respect, each column of $ \boldsymbol{S}_n $ represents a word of a sentence in the sequence to sequence model. As a training trick, in \cite{sutskever2014sequence}, the authors show that the last word in the sentence always must be a fixed value that represents the end of sentence. To this end, we train our model by flipping the columns of $ \boldsymbol{S}_n $ left to right. In other words, $ \boldsymbol{s}_{n,n} $ is used as the first input, $ \boldsymbol{s}_{n,n-1} $ is used as the second input and so on. This guarantees the last input to be $ \boldsymbol{s}_{n,1} $ which has fixed values (the energy levels of nodes in the beginning of the problem) as shown in \eqref{eq:state}. We will use the concatenation of vectors $ \boldsymbol{c}_n $ and $ \boldsymbol{h}_n $, as the state representation $ \bar{\boldsymbol{s}}_n $ in our DQN. 

The size of $ \bar{\boldsymbol{s}}_n $ is a hyperparamter of the model which requires to be optimized. To this end, in Algorithm \ref{Algorithm:LSTMAutoencoder}, we propose an iterative method to find the optimal state representation. First we define the weight-based scheduling policy, $ \boldsymbol{u}^\lambda $, as the one that starts with an empty vector, then, keep adding nodes to the policy randomly using a multinomial distribution where the probability of choosing each node $ n $ will be its weight in NWAoI, $ \lambda_{m} $. We use this policy to collect experiences from the problem to train our LSTM autoencoder. In other words, for any $ \boldsymbol{u}^\lambda $, we solve the problem in \eqref{eq:objrelaxed} and derive the state $ \boldsymbol{S}_n $. Afterwards, we use this state to train the model in Fig. \ref{fig:LSMTAutoEncoder}. We train the model using the back propagation method in \cite{speech2013Graves} for different sizes of $ \bar{\boldsymbol{s}}_n $ and choose the size that has the minimum test mean squared error (MSE). Algorithm \ref{Algorithm:LSTMAutoencoder} shows the steps of the training process.

\begin{figure}[t]
	\centering
	\includegraphics[width=\columnwidth]{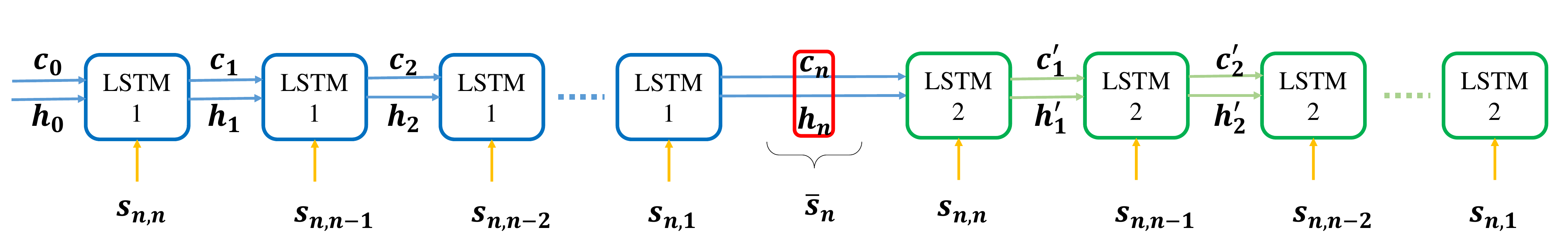}
	\caption{An LSTM-based autoencoder architecture.}
	\label{fig:LSMTAutoEncoder}
\end{figure}

\begin{algorithm}[t]
    \footnotesize
	\caption{Hyperparameter Optimization for LSTM autoencoder-based State Representation}
	\begin{algorithmic}[1] 
		\State Set minimum and maximum sizes, $ k^{\textrm{min}}_{c} $, $ k^{\textrm{max}}_{c} $, $ k^{\textrm{min}}_{h} $, $ k^{\textrm{max}}_{h} $ for the vectors $ \boldsymbol{c}_n $ and $ \boldsymbol{h}_n $ to certain values. Set maximum number of episodes $ \bar{e} $ to a certain value and $e=1$ and initialize a memory that stores the past states of the problem.
		\State Observe the initial state $ \boldsymbol{S}_0 $, and store it to the memory.
		\State \textbf{Repeat:}
		\State \quad Set $n=1$, initialize an empty policy $ \boldsymbol{u}^\lambda_0 = []$, 
		\State \quad \textbf{Repeat:}
		\State \quad \quad Select an action $ a $ using a binomial distribution with probabilities equal to the weights of nodes.
		\State \quad \quad Append action $ a $ to the end of policy $ \boldsymbol{u}^\lambda_{n-1} $ as $ \boldsymbol{u}^\lambda_{n} = \left[\boldsymbol{u}^\lambda_{n-1},a\right] $. 
		\State \quad \quad \quad Solve \eqref{eq:objrelaxed} using $ \boldsymbol{u}_{n} $. If \eqref{eq:objrelaxed} had a solution find $ \boldsymbol{S}_{n+1} $, otherwise, break the loop.
		\State \quad \quad Store state $ \boldsymbol{S}_{n+1} $ in the replay memory.
		\State \quad \quad $ n = n+1 $
		\State \quad e = e + 1.
		\State \textbf{Until} $e+1 = \bar{e}$.
		\State Split the memory randomly into training and test memories with 7 to 3 ratio.
		\State Set $ k_c^\star = k^{\textrm{min}}_{c} $, $ k_h^\star = k^{\textrm{min}}_{h} $, and $ k_c = k^{\textrm{min}}_{c} $. 
		\State \textbf{Repeat}:
		\State \quad Set $ k_h = k^{\textrm{min}}_{h} $.
		\State \quad \textbf{Repeat:}
		\State \quad \quad Initialize an LSTM autoencoder architecture with $ k_c $ and $ k_h $ number of elements for $ \boldsymbol{c}_n $ and $ \boldsymbol{h}_n $.
		\State \quad \quad Train the architecture using the states in the training memory and applying back propagation\cite{speech2013Graves}. 
		\State \quad \quad Derive the average MSE between the states in the test memory and the output of LSTM autoencoder.
		\State \quad \quad If the average MSE is smaller than the minimum MSE so far, set $ k_c^\star = k_c $ and $ k_h^\star = k_h $.
		\State \quad \quad Set $ k_h = k_h+1 $.
		\State \quad \textbf{Until} $ k_h+1 = k^{\textrm{max}}_{h} $.
		\State \textbf{Until} $ k_c+1 = k^{\textrm{max}}_{c} $.
		\State \textbf{Output} $ k^\star_c $ and $ k^\star_h $.
	\end{algorithmic}
	\label{Algorithm:LSTMAutoencoder}
\end{algorithm}
\section{Simulation Results}\label{sec:results}
For our simulations we consider a rectangular area within the following coordinates: $(0,0)$, $(0,1000)$, $(1000,0)$, and $(1000,1000)$. Unless otherwise stated, we consider $ B = 1$ MHz, $ S = 10 $ Mbits, $ \sigma^2 = -100 $ dBm, $ h = 80 $ meters, $v^{\textrm{max}}_x = v^{\textrm{max}}_y = 25$ m/s, and $ \tau = 900 $ seconds. We randomly generate the $ x$ and $y$ coordinates of the initial and final location of UAV as well as the location of the nodes using a uniform distribution on interval $[0,1000]$ meters. Also, the nodes' battery levels are drawn uniformly between 0.1 and 1 joules and the each node;s importance value is drawn uniformly between 0 and 1 and then normalized over the sum of the importance values. We train the UAV, using the ANN architecture in \cite{mnih2015human} with no convolutional neural networks and only one FC layer. We use the Tensorflow-Agents library \cite{TFAgents} for designing the environment, policy, and costs. In addition, we use 8 NVIDIA P100 GPU and 20 Gigabits of memory to train the UAV. All statistical NWAoI results are averaged over 1000 episodes.
\begin{figure}[t!]
    \centering
    \includegraphics[width=0.55\columnwidth]{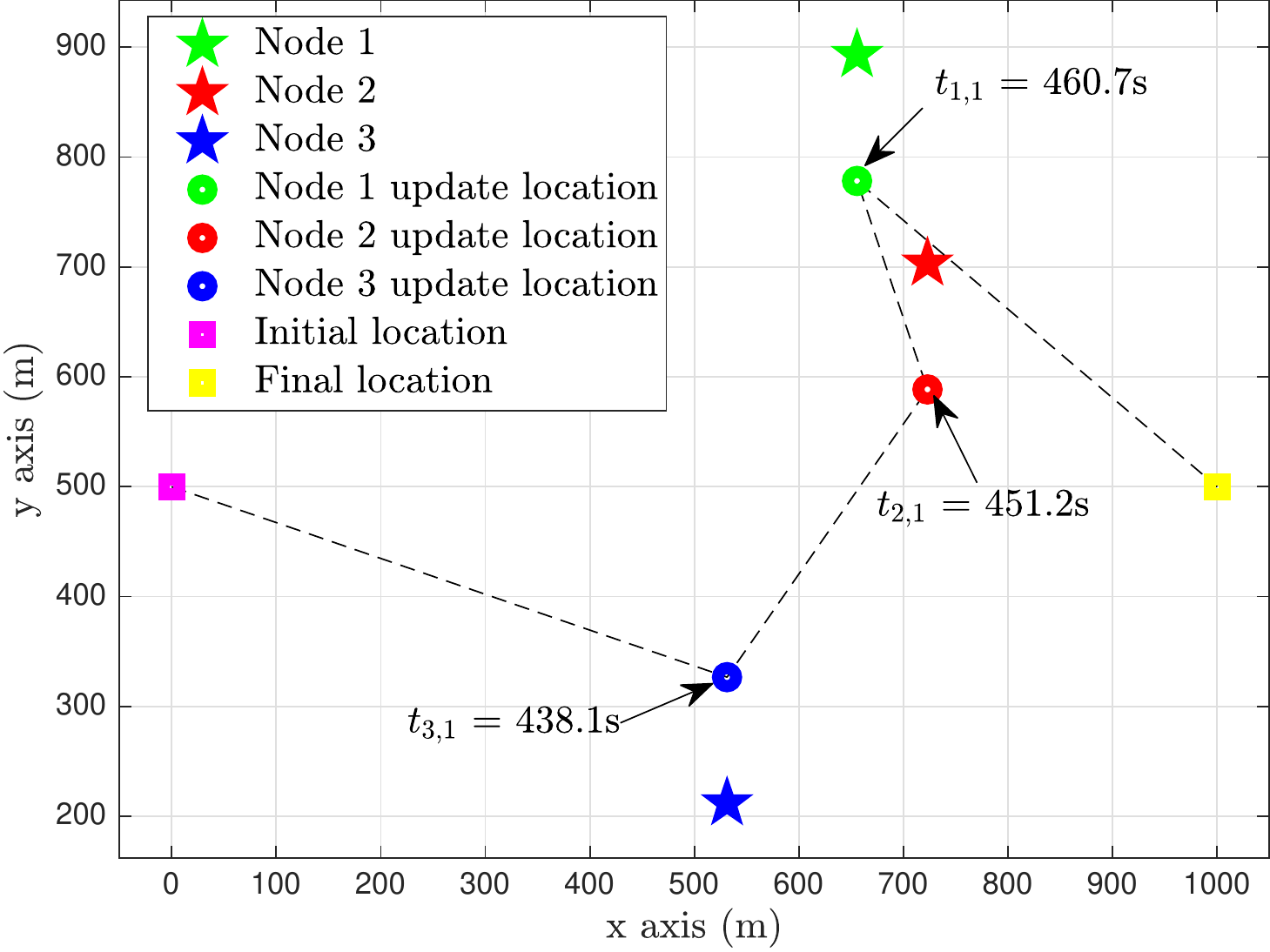}
    \vspace{-3mm}
    \caption{Trajectory optimization using convex optimization.}
    \label{fig:trajectory}
\end{figure}

\subsection{Convex Optimization-based Trajectory}
In Fig. \ref{fig:trajectory}, we consider 3 nodes whose energy levels are randomly drawn between 0.1 and 0.2 joules. The initial and final locations of the UAV are at $ (0,500) $ and $(500,500)$ meters. To study this scenario, we consider a brute force method and solve problem \eqref{eq:objrelaxed} for all of the combinations in \eqref{eq:maxupdate}. Fig. \ref{fig:trajectory} shows the optimal trajectory of the UAV as well as each node's update time instant. Fig. \ref{fig:trajectory} shows that each node can be updated only once during the scenario. Therefore, the UAV tries to update the nodes as close as possible to $ \frac{\tau}{2} = 450$ seconds which is the optimal update time instant when each node  can be updated only once due to Theorem \ref{theorem:lowerbound}. Moreover, Fig. \ref{fig:trajectory} shows that, at the update time instants, the UAV tries to be as close as possible to the nodes in order to consume the least energy.
\begin{figure}
    \centering
    \includegraphics[width = 0.8\textwidth]{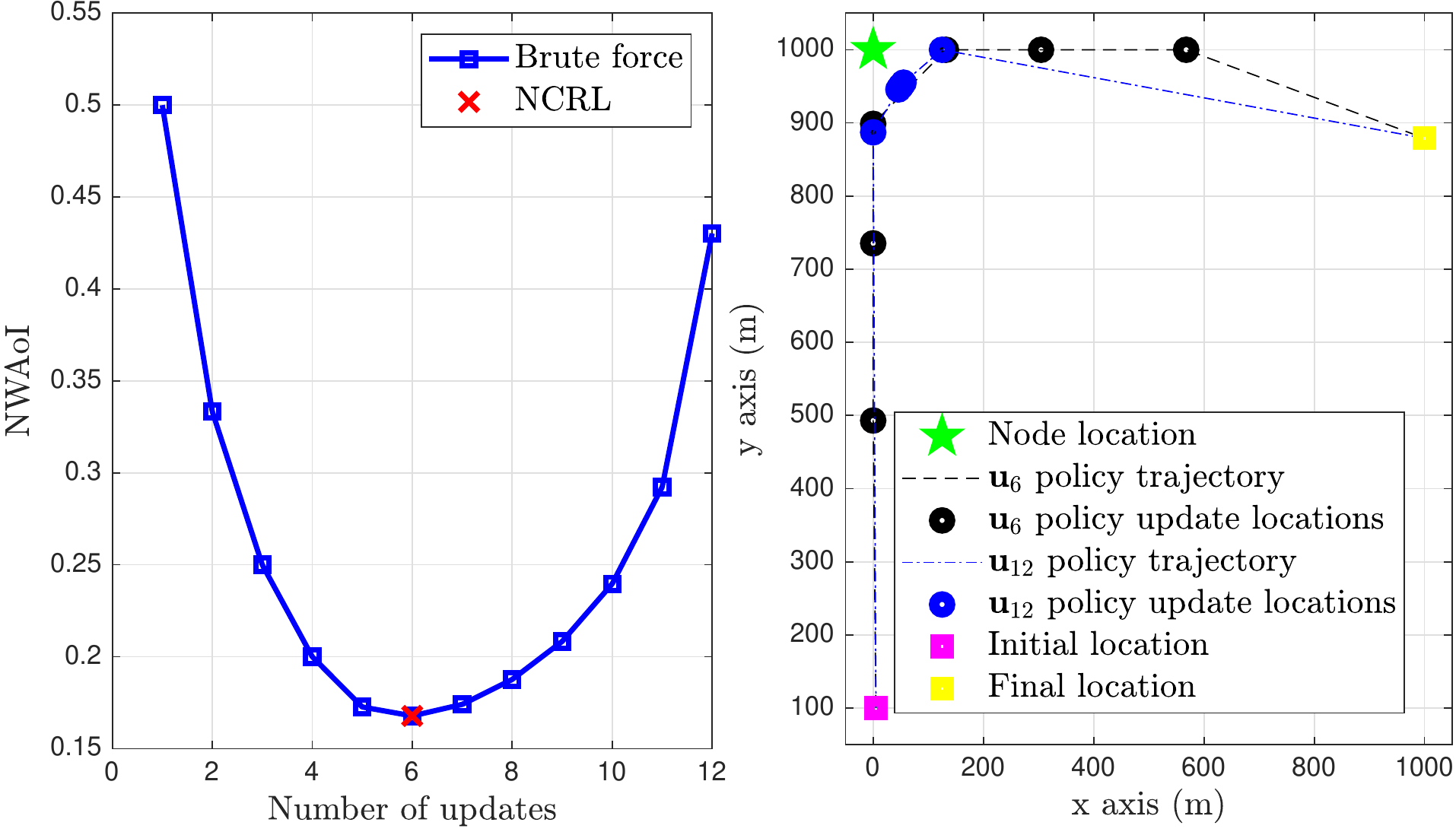}
    \vspace{-3mm}
    \caption{NWAoI vs number of updates for a single node scenario.}
    \label{fig:bruteforce}
\end{figure}

Fig. \ref{fig:bruteforce} presents the impact of the number of updates on NWAoI for a simple scenario with only 1 node that has 1 joule energy. From Fig. \ref{fig:bruteforce}, we observe that the maximum number of times that the UAV can update the node is 12. However, the minimum NWAoI is achieved after 6 updates. This is due to the fact that, as seen in Fig. \ref{fig:bruteforce}, having more updates restricts the node to use small energy levels for each transmission. Therefore, the UAV needs to be closer to the node at each update. This can be seen by comparing the policies with 6 ($\boldsymbol{u}_6$) and 12 ($\boldsymbol{u}_{12}$) updates in Fig. \ref{fig:bruteforce}. Clearly, the update locations of $\boldsymbol{u}_6$ are more uniformly distributed on the UAV's trajectory compared to the update locations of $\boldsymbol{u}_{12}$ which are distributed closely to the node's location. Thus, the difference between the time instants of policy $\boldsymbol{u}_6$ are larger and its resulting NWAoI is smaller than  $\boldsymbol{u}_{12}$. This showcases the importance of action $ a = 0$ which is the terminal action in the optimal policy (since adding more updates for a node does not necessarily reduce the NWAoI). In fact, in Fig. \ref{fig:bruteforce}, we compare the brute force method to the proposed NCRL and show that NCRL can find the optimal number of updates for this scenario.
\begin{figure}
    \centering
    \includegraphics[width = 0.55\textwidth]{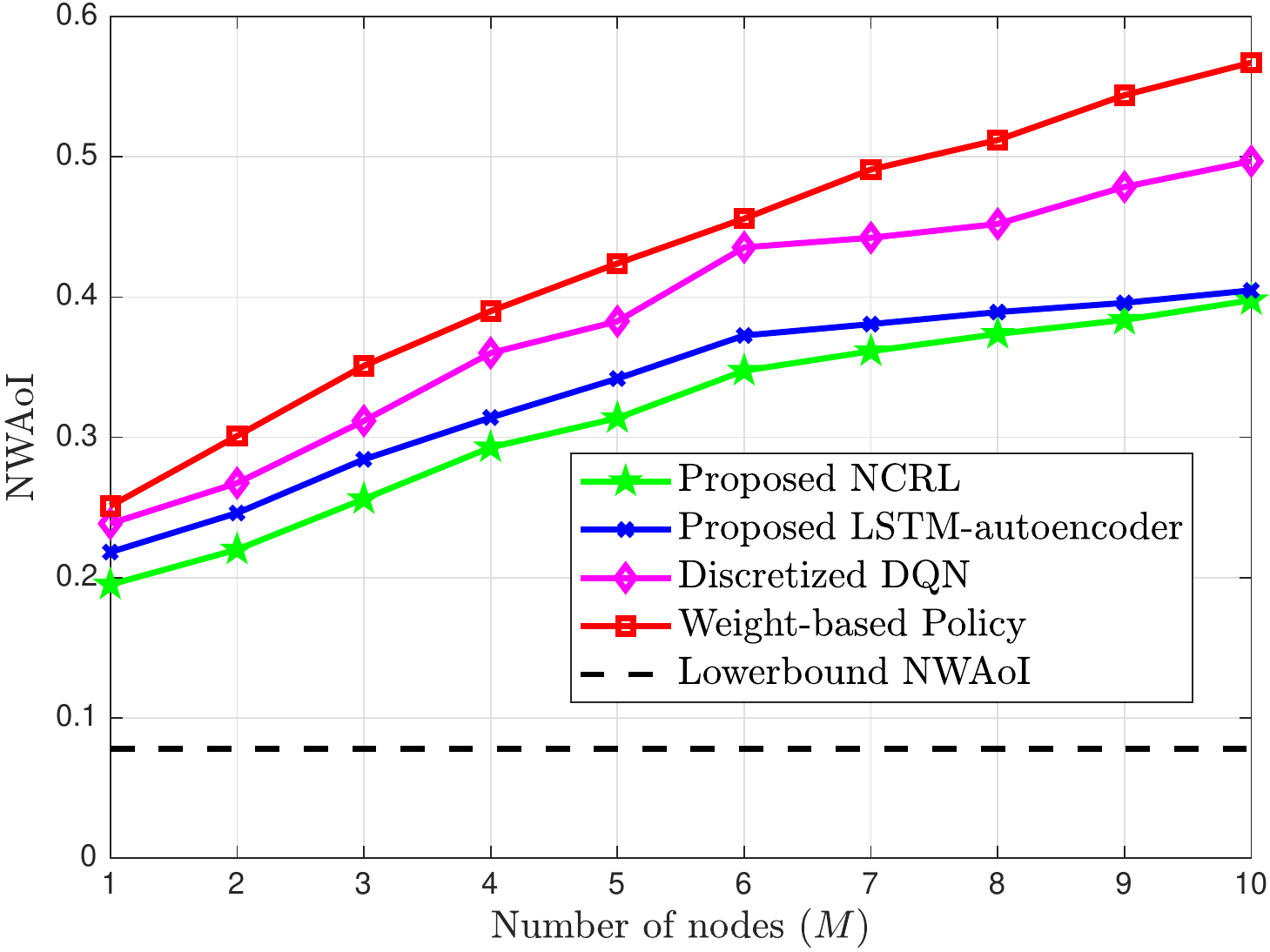}
    \vspace{-3mm}
    \caption{Comparison between the proposed NCRL and LSTM autoencoder with discretized DQN and weight-based policy for small numbers of nodes.}
    \label{fig:nodes_small}
\end{figure}
\subsection{Learning-based scheduling policy}
Fig. \ref{fig:nodes_small} shows the impact of the number of nodes on the NWAoI. In particular, we compare our proposed NCRL and LSTM autoencoder with a discretized DQN approach proposed in our early work \cite{UAV_AoI_DRL1} and the weight-based policy. Fig. \ref{fig:nodes_small} shows that both the proposed methods yield lower NWAoI compared to the discretized DQN and weight-based policies. As the number of nodes increases, the NWAoI increases for all four policies since: 1) each node will update its process less often than the case with smaller network, 2) the action space increases, i.e., the number of feasible scheduling policies increases progressively as shown in \eqref{eq:maxupdate}, which makes finding the optimal policy more challenging, and 3) the spatio-temporal interdependence between the nodes' locations and their update time instants increases. However, as seen from Fig. \ref{fig:nodes_small}, while the gap between NCRL and the discretized/weight-based policy increases when the number of nodes increases, the gap between NCRL and LSTM autoencoder reduces. This is because, for larger number of nodes, the LSTM autoencoder starts showing its impact in learning the spatio-temporal interdependence between the states of the problem. From Fig. \ref{fig:nodes_small}, we can also observe that the lower bound on NWAoI (expressed in \eqref{eq:minNWAoI}) does not depend on the number of nodes since it is only a function of the nodes' weights and maximum number of allowable updates, i.e., the number of nodes does not have any impact on that lower bound value.
\begin{figure}
    \centering
    \includegraphics[width = 0.55\textwidth]{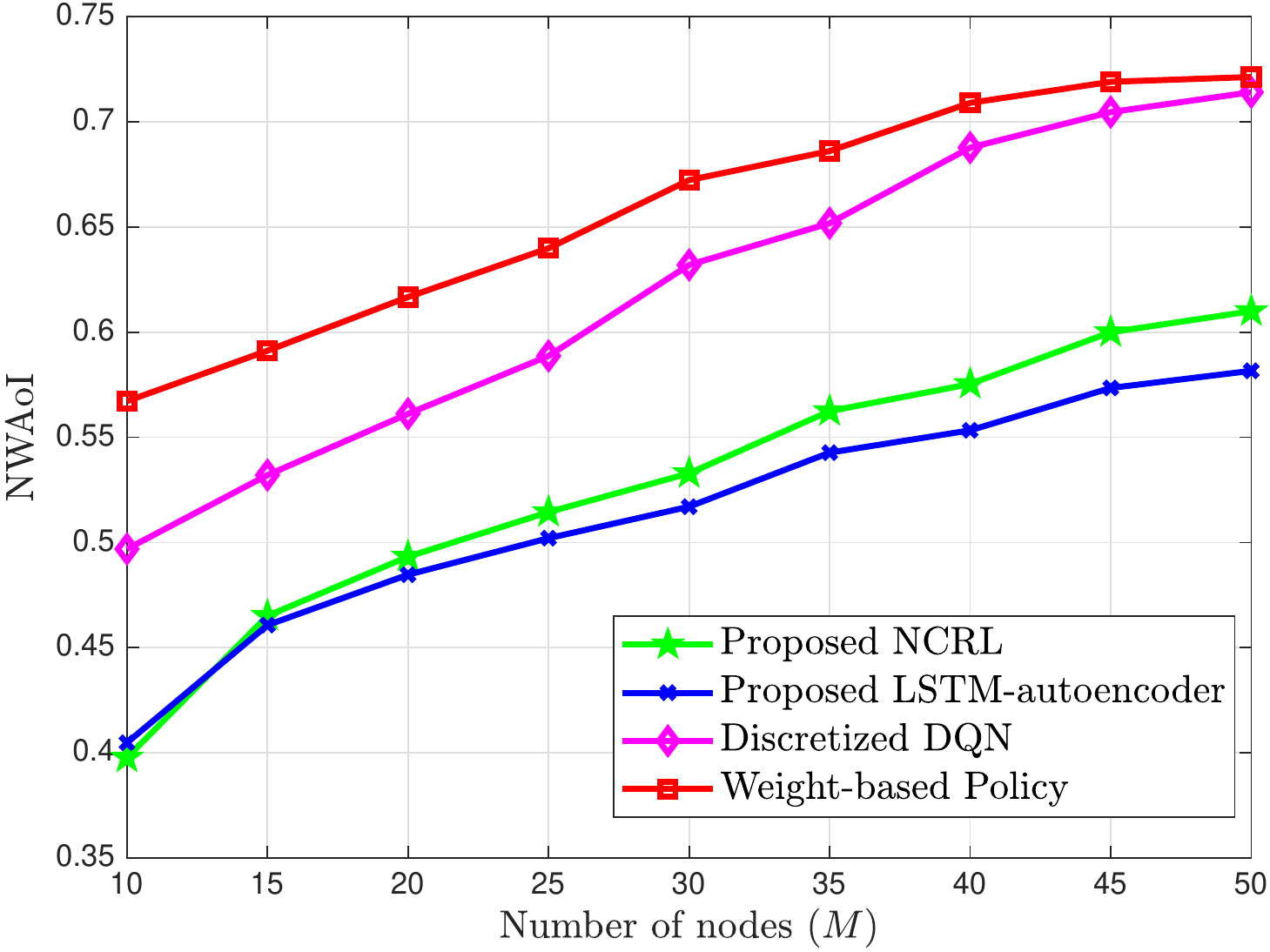}
    \vspace{-3mm}
    \caption{Comparison between the proposed NCRL and LSTM autoencoder with discretized DQN and weight-based policy for large numbers of nodes.}
    \label{fig:nodes_large}
\end{figure}

In Fig. \ref{fig:nodes_large}, we study the impact of having a large number of nodes on the performance of the four policies. Fig. \ref{fig:nodes_large} demonstrates that, as the number of nodes increases, the proposed LSTM autoencoder shows its impact and results in a smaller NWAoI compared to NCRL. This shows that the proposed LSTM autoencoder can capture some interdependencies between the states of the problem that only using the last column of the state will fail to capture. Therefore, an LSTM autoencoder can learn better policies compared to NCRL. The reason why the LSTM autoencoder cannot outperform NCRL for a small number of nodes is because its accuracy is not 100\% when finding the state representation for short sequence sizes. Therefore, for a small network, the test error prevents LSTM autoencoder to outperform NCRL. However, for a large network of nodes, the benefits of using the LSTM autoencoder is larger than its test error, and, thus, it can outperform NCRL. Fig. \ref{fig:nodes_large} also shows that for large-scale networks, the discretized DQN in \cite{UAV_AoI_DRL1} fails to even outperform the weight-based policy since, in this method, the state space grows exponentially which makes it harder for the DQN to learn a good policy.
\begin{figure}
    \centering
    \includegraphics[width = 0.55\textwidth]{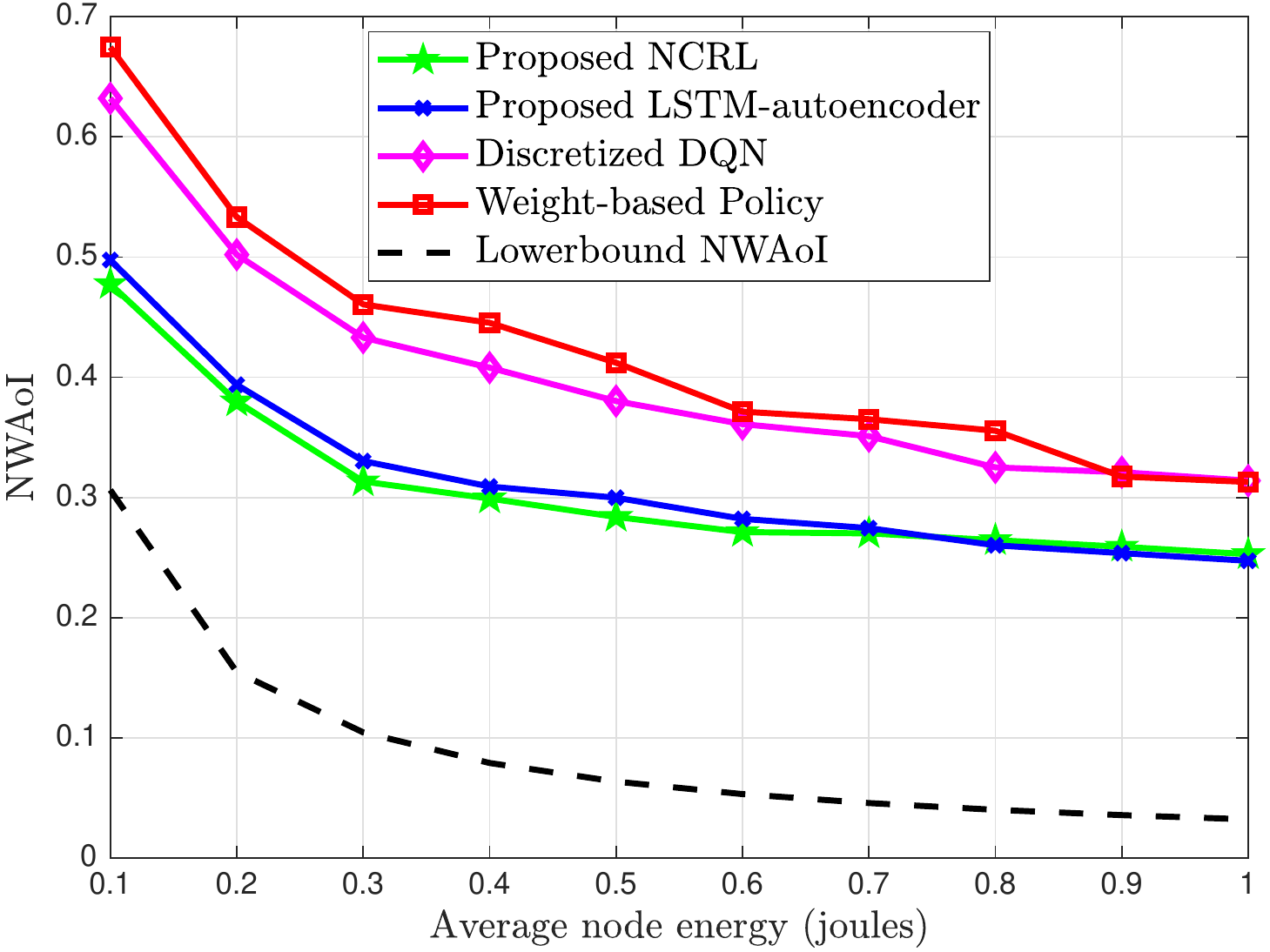}
    \vspace{-3mm}
    \caption{The impact of node energy levels on NWAoI.}
    \label{fig:energy}
\end{figure}

Fig. \ref{fig:energy} shows the impact of the node energy level on NWAoI. In Fig. \ref{fig:energy}, we consider 3 nodes with energy levels randomly drawn from: $\left[0.05,0.15\right]$, $\left[0.15,0.25\right]$, $\dots$, $\left[0.95,1.05\right]$ joules. Thus, the average energy level of the nodes will be between $0.1$ and $1$ joules. Fig. \ref{fig:energy} demonstrates that the proposed NCRL and LSTM autoencoder can outperform the discretized DQN and weight-based policies. Moreover, Fig. \ref{fig:energy} shows that as the energy level of the nodes increases, the LSTM autoencoder achieves lower NWAoI compared to NCRL. This is due to the fact that larger energy levels help the UAV update the nodes for a larger number of times which, in turn, increases the size of the state matrix $\boldsymbol{S}_{n}$. Therefore, the effect of the LSTM autoencoder can be more obvious when the nodes' energy levels increase. From Fig. \ref{fig:energy}, we also observe that for a larger average node energy, the discretized DQN cannot achieve a good performance and in some cases the weight-based policy has a lower NWAoI. This is because of the nature of the discretized DQN approach in \cite{UAV_AoI_DRL1} where the state space and the complexity of the problem grow progressively with the energy levels of the nodes while the weight-based policy's complexity does not depend on the energy levels. Fig. \ref{fig:energy} also shows that the lower bound on NWAoI decreases sub-linearly with respect to the average node energy level which means that the impact of energy level reduces gradually as the node energy levels increase. Such a sub-linear behavior can be noticed also for all four policies.
\begin{figure}
    \centering
    \includegraphics[width = 0.55\textwidth]{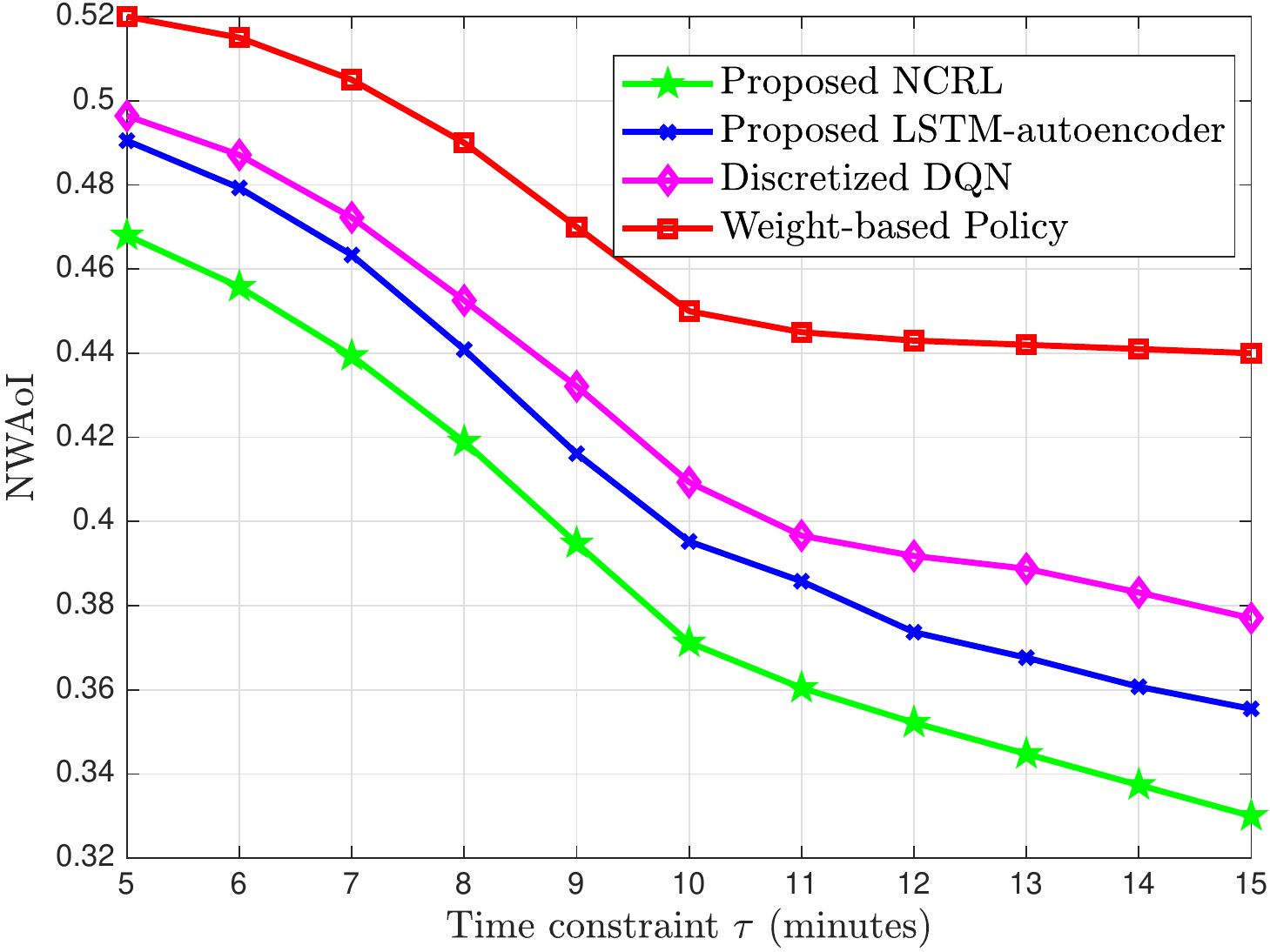}
    \vspace{-3mm}
    \caption{The impact of time constraint on NWAoI.}
    \label{fig:timeconstraint}
\end{figure}

Fig. \ref{fig:timeconstraint} compares the performance of the proposed NCRL and LSTM autoencoder with discretized DQN and weight-based policies as a function on the time constraint $\tau$. We consider 3 nodes and solve the problem for different scenarios with time constraint between 5 to 15 minutes. Fig. \ref{fig:timeconstraint} shows that, as the time constraint increases, the NWAoI becomes smaller since a larger time constraint gives more opportunity to the UAV to move closer to the nodes and update the node status more frequently. Moreover, Fig. \ref{fig:timeconstraint} shows that the proposed NCRL and LSTM autoencoder can outperform the discretized DQN and weight-based policies. Furthermore, the performance gap between the four policies stay fixed which indicates that the time constraint has a general impact on the solution of the problem and does not depend on the policy type.
\begin{figure}
    \centering
    \includegraphics[width = 0.55\textwidth]{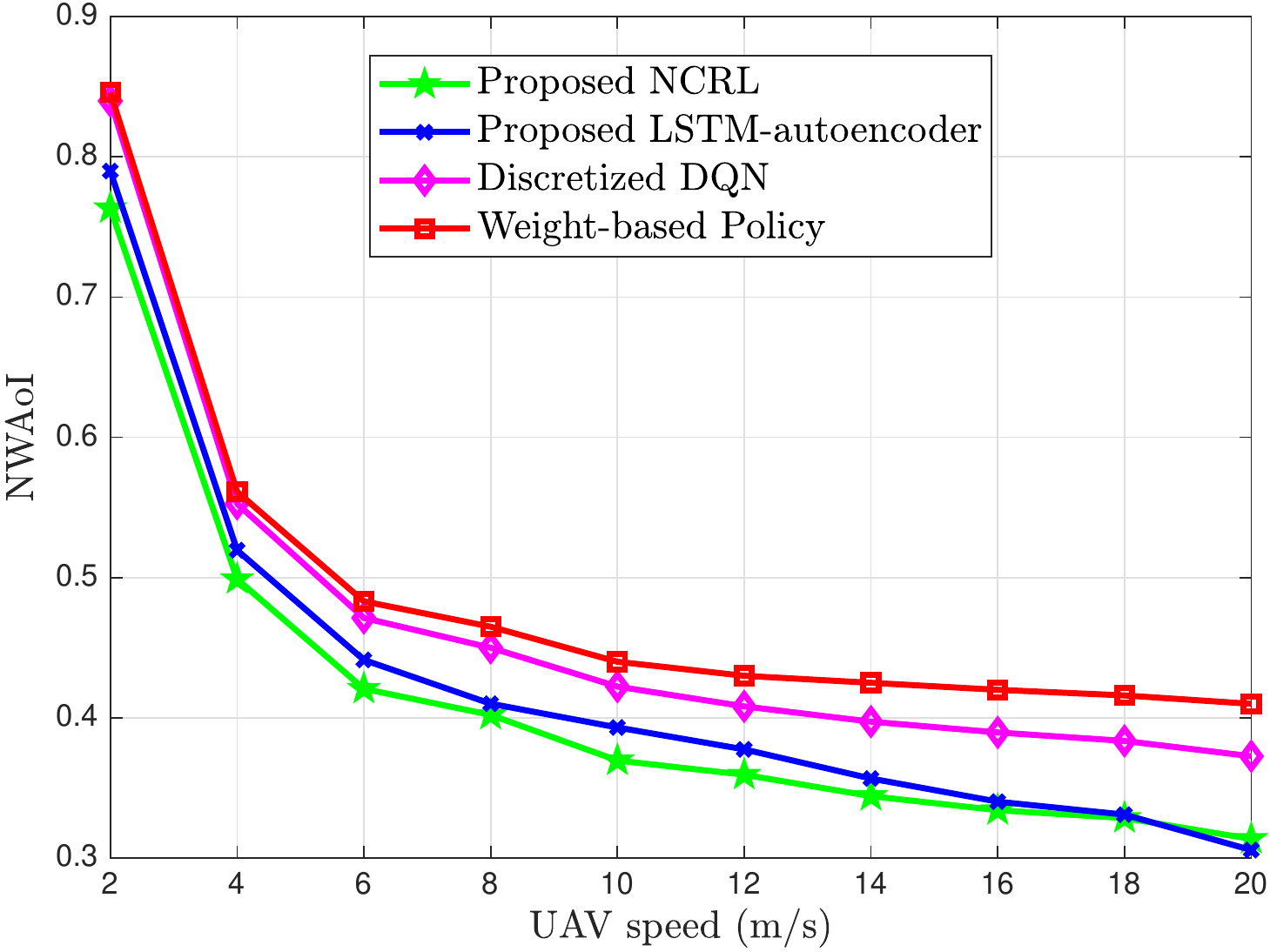}
    \vspace{-3mm}
    \caption{The impact of the UAV speed on NWAoI.}
    \label{fig:speed}
\end{figure}

In Fig. \ref{fig:speed}, we consider three nodes while the UAV speed varies between 2 and 20 m/s. From Fig. \ref{fig:speed} we notice that for small values of UAV speed, NWAoI is almost similar for NCRL, LSTM autoencoder, discretized DQN, and weight-based policy since the UAV cannot cover large areas and due to its time constraint it may not even update any node. However, as the UAV speed increases, the NWAoI also decreases because the UAV can move around faster and can update nodes more frequently. Fig. \ref{fig:speed} demonstrates that LSTM autoencoder can achieve even lower NWAoI values compared to NCRL for higher UAV speeds. This is due to the fact that, the number of updates increases with the increase in speed which results in larger state matrices. Therefore, the LSTM autoencoder can learn a better representation of the state which will result in learning better policies.

\section{Conclusion}\label{sec:con}
In this paper, we have investigated the problem of minimizing the NWAoI for a UAV-assisted wireless network in which a UAV collects status update packets from energy-constrained ground nodes. First, we have formulated the problem as a mixed-integer program. Then, for a given scheduling policy, we have proposed a convex optimization-based approach to
obtain the UAV's optimal flight trajectory and time instants on updates. However, due to the combinatorial nature of the formulated problem, it is very challenging to find the optimal scheduling policy. To overcome this hurdle, we have proposed a novel NCRL algorithm using DQN to reduce the system state complexity while learning the optimal scheduling policy at the same time. However, for large-scale networks, the DQN cannot efficiently learn the optimal scheduling policy. Therefore, we have then proposed an LSTM autoencoder that can help the proposed deep RL to learn a better policy for such large-scale scenarios. We have analytically derived a lower bound on the minimum NWAoI, and obtained an upper bound on the UAV's minimum speed to achieve that lower bound value. Our numerical results have shown that the proposed NCRL algorithm significantly outperforms baseline policies, such as the discretized DQN and weight-based policies, in terms of the achievable NWAoI per process. They have also demonstrated that the achievable NWAoI by the proposed algorithm is monotonically decreasing with the time constraint of the UAV, the battery sizes of the ground nodes, and the UAV speed.
%, whereas it is monotonically increasing with the number of nodes.
 %while jointly optimizing the UAV's flight trajectory as well as scheduling of status update packets.
% the achievable sum-AoI per process by the proposed deep RL algorithm, is: i) monotonically decreasing with respect to the battery sizes of the ground nodes, and ii) monotonically increasing with respect to the time constraint of the UAV as well as the spatial density of ground nodes.
%\def\baselinestretch{0.9}
\begin{appendix}
\subsection{Proof of Theorem \ref{theorem:lowerbound}}\label{App:Theorem1}
	The minimum required energy for an update from a node $ m $ is $ \frac{\sigma ^2 \left(2^{S/B} - 1\right)h^2}{\beta_0} $ which is the case when the UAV requests for update from node $ m $ while it stays on top of node $ m $, i.e. $ x_{m,i} = x_m $ and $ y_{m,i} = y_i $. In this case, every node $ m $, will be updated $ \bar{n}_m $ times in the entire $ \tau $ seconds. However, this requires UAV to move from the top of a node to top of another node in less than the time difference between two optimal consecutive update time instants. Therefore, in order to find the lower bound on NWAoI, we neglect the limit on the UAV's speed and find the optimal update time instants for each node. Note that, in this case, we assume that \eqref{eq:xspeedconst} and \eqref{eq:yspeedconst} are always satisfied. Here, we define $ \delta_{i,m} \triangleq t_{i,m} - t_{i-1,m}$ as the difference between two update time instants of node $ m $. Then, we have:
	\begin{align}\label{eq:objdelta}
	\bar{G} = \frac{1}{\tau^2}\sum_{m=1}^{M}\lambda_{m}\left(\sum_{i = 1}^{\bar{n}_m}\delta_{i,m}^2 + \left(\tau - \sum_{i = 1}^{\bar{n}_m}\delta_{i,m}\right)^2\right).
	\end{align}
	Since \eqref{eq:objdelta} is a convex function, we take the first derivative of $\bar{G}$ with respect to $ \delta_{i,m} $, for $1 \leq m \leq M $ and $ 1 \leq i \leq \bar{n}_{m} $, and set it equal to 0 in order to find the optimal update time instants which yields:
	\begin{align}
		\frac{\partial \bar{G}}{\partial \delta_{i,m}} = \frac{2\lambda_{m}}{\tau^2}\left(\delta_{i,m} - \left(\tau - \sum_{j = 1}^{n_m}\delta_{j,m}\right)\right) = \frac{2\lambda_{m}}{\tau^2} \left(2\delta_{i,m} + \sum_{j = 1, j \neq i}^{n_m}\delta_{j,m} - \tau\right).
	\end{align}
	Thus, for every node $ m $ the optimal values for $ \delta_{i,m} $ is the solution of the following equation:
	\begin{align}\label{eq:matrix}
		\left[\begin{array}{c c c c}
		2 & 1 & \cdots & 1\\
		1 & 2 & \ddots & \vdots\\
		\vdots & \ddots & \ddots & 1 \\
		1 & \cdots & 1 & 2
		\end{array}\right]\left[\begin{array}{c}
		\delta_{1,m}\\\vdots\\\delta_{\bar{n}_m,m}
		\end{array}\right] = \left[\begin{array}{c}
		\tau\\\vdots\\\tau
		\end{array}\right].
	\end{align}
	Now, if we subtract the first row of the matrix in \eqref{eq:matrix} from all of the other rows we will have:
	\begin{align}\label{eq:matrix2}
		\left[\begin{array}{c c c c c}
		2 & 1 & \cdots & \cdots & 1\\
		-1 & 1 & 0 & \cdots & 0 \\
		\vdots & 0 & \ddots & \ddots & \vdots\\
		\vdots & \vdots & \ddots &\ddots & 0\\
		-1 & 0 & \cdots & 0 & 1
		\end{array}\right]\left[\begin{array}{c}
		\delta_{1,m}\\\vdots\\\delta_{\bar{n}_m,m}
		\end{array}\right] = \left[\begin{array}{c}
		\tau\\ 0 \\ \vdots\\0
		\end{array}\right],
	\end{align} 
	which yields  $ \delta_{1,m} = \delta_{2,m} = \dots = \delta_{\bar{n}_m,m} = \frac{\tau}{\bar{n}_m + 1} $. Therefore, the optimal NWAoI will be:
	\begin{align}
		\bar{G}_{\textrm{min}} = \frac{1}{\tau^2}\sum_{m=1}^{M}\lambda_{m}\left(\bar{n}_m + 1\right)\left(\frac{\tau}{\bar{n}_m + 1}\right)^2,
	\end{align}
	which can be simplified to \eqref{eq:minNWAoI}.
\end{appendix}
\bibliographystyle{IEEEtran}
\bibliography{OptAoI}
\end{document}